\documentclass{llncs}
\usepackage{amsmath, amssymb}
\usepackage{inlinenum}
\usepackage{bussproofs}
\usepackage{xspace}
\usepackage{cmll}
\usepackage{stmaryrd}
\usepackage{fmtcount}
\usepackage{ifthen}
\usepackage{tikz}
\usepackage{graphicx}
\usepackage{footnote}
\usepackage{wrapfig}
\usepackage{caption}

\newcommand{\dl}{\textsf{d{\kern-0.1em}$\mathcal{L}$}\xspace}
\newcommand{\qdl}{Q\textsf{d{\kern-0.1em}$\mathcal{L}$}\xspace}
\newcommand{\qdtl}{QdTL\xspace}

\newcommand{\inferencerule}[2]{\displaystyle\dfrac {#1}{#2}}



\title{Quantified Differential Temporal Dynamic Logic for Verifying Properties of Distributed Hybrid Systems\thanks{ This material is based upon work supported by the National Science Foundation under NSF
CAREER Award CNS-1054246, Grant Nos. CNS-0926181, CNS-0931985, and CNS-1035800.}}
\author{Ping Hou}
 \institute {Computer Science Department, Carnegie Mellon University}

\begin{document}
\maketitle

\begin{abstract}
   We combine quantified differential dynamic logic (\qdl) for reasoning about the possible behavior of distributed hybrid systems with temporal logic for reasoning about the temporal behavior during their operation. Our logic supports verification of temporal and non-temporal properties of distributed hybrid systems and provides a uniform treatment of discrete transitions, continuous evolution, and dynamic dimensionality-changes. For our combined logic, we generalize the semantics of dynamic modalities to refer to hybrid traces instead of final states. Further, we prove that this gives a conservative extension of \qdl for distributed hybrid systems. On this basis, we provide a modular verification calculus that reduces correctness of temporal behavior of distributed hybrid systems to non-temporal reasoning, and prove that we obtain a complete axiomatization relative to the non-temporal base logic \qdl. Using this calculus, we analyze temporal safety properties in a distributed air traffic control system where aircraft can appear dynamically.

\end{abstract}

\section{Introduction}\label{sec:1}
Ensuring correct functioning of cyber-physical systems is among the most challenging and most important problems in computer science, mathematics, and engineering. Hybrid systems are common mathematical models for cyber-physical systems with interacting discrete and continuous behavior~\cite{DavorenN00,Henzinger96}. Their behavior combines continuous evolution (called {\em flow}) characterized by differential equations and discrete jumps. However, not all relevant cyber-physical systems can be modeled as hybrid systems. Hybrid systems cannot represent physical control systems that are distributed or form a multi-agent system, e.g., distributed car control systems~\cite{HsuESV91} and distributed air traffic control systems~\cite{DowekMC05}. Such systems form {\em distributed hybrid systems}~\cite{DeshpandeGV96,KratzSPL06,DBLP:conf/csl/Platzer10,DBLP:conf/csl/Platzer10:TR} with discrete, continuous, structural, and dimension-changing dynamics. Distributed hybrid systems combine the challenges of hybrid systems and distributed systems. Correctness of safety-critical real-time and distributed hybrid systems depends on a safe operation throughout {\em all} states of all possible trajectories, and the behavior at intermediate states is highly relevant~\cite{AlurCD90,DammHO03,DavorenN00,FaberM06,Henzinger96}.

{\em Temporal logics} (TL) use temporal operators to talk about intermediate states~\cite{AlurCD90,EmersonC82,EmersonH86,Pnueli77}. They have been used successfully in model checking~\cite{AlurCD90,ClarkeGP99,Henzinger96,HenzingerNSY92,MysorePM05} of finite-state system abstractions. State spaces of distributed hybrid systems, however, often do not admit equivalent finite-state abstractions~\cite{Henzinger96,MysorePM05}. Instead of model checking, TL can also be used deductively to prove validity of formulas in calculi~\cite{DavorenCM04,DavorenN00}. Valid TL formulas, however, only express very generic facts that are true for all systems, regardless of their actual behavior. Hence, the behavior of a specific system first needs to be axiomatized declaratively to obtain meaningful results. Then, however, the correspondence between actual system operations and a declarative temporal representation may be questioned.

Very recently, a dynamic logic, called {\em quantified differential dynamic logic} (\qdl) has been introduced as a successful tool for deductively verifying distributed hybrid systems~\cite{DBLP:conf/csl/Platzer10,DBLP:conf/csl/Platzer10:TR}. \qdl can analyze the behavior of actual distributed hybrid system models, which are specified operationally. Yet, operational distributed hybrid system models are {\em internalized} within \qdl formulas, and \qdl is closed under logical operators.
However, \qdl only considers the behavior of distributed hybrid systems at final states, which is insufficient for verifying safety properties that have to hold all the time.

We close this gap of expressivity by combining \qdl with temporal logic~\cite{EmersonC82,EmersonH86,Pnueli77}.
In this paper, we introduce a logic, called {\em quantified differential temporal dynamic logic} (\qdtl), which provides modalities for quantifying over traces of distributed hybrid systems based on \qdl. We equip \qdtl with temporal operators to state what is true all along a trace or at some point during a trace. In this paper, we modify
the semantics of the dynamic modality $[\alpha]$ to refer to all {\em traces} of $\alpha$ instead of all final states
reachable with $\alpha$ (similarly for $\langle \alpha \rangle$). For instance, the formula $[\alpha]\Box\phi$ expresses
that $\phi$ is true at each state during all traces of the distributed hybrid system $\alpha$. With this,
\qdtl can also be used to verify temporal statements about the behavior of $\alpha$ at intermediate states during system runs.
As in our non-temporal dynamic logic \qdl~\cite{DBLP:conf/csl/Platzer10,DBLP:conf/csl/Platzer10:TR}, we use {\em quantified hybrid programs} as an operational model for distributed hybrid systems, since they admit a uniform compositional treatment of interacting discrete transitions, continuous evolutions, and structural/dimension changes in logic.

As a semantical foundation for combined temporal dynamic formulas, we introduce a hybrid trace semantics for \qdtl. We prove that \qdtl is a conservative extension of \qdl: for non-temporal specifications, trace semantics is equivalent to the non-temporal transition semantics of \qdl~\cite{DBLP:conf/csl/Platzer10,DBLP:conf/csl/Platzer10:TR}.

As a means for verification, we introduce a sequent calculus for \qdtl that successively reduces temporal statements about traces of quantified hybrid programs to non-temporal \qdl formulas. In this way, we make the intuition formally precise that temporal safety properties can be checked by augmenting proofs with appropriate assertions about intermediate states. Like in~\cite{DBLP:conf/csl/Platzer10,DBLP:conf/csl/Platzer10:TR}, our calculus works compositionally. It decomposes correctness statements about quantified hybrid programs structurally into corresponding statements about its parts by symbolic transformation.

Our approach combines the advantages of \qdl in reasoning about the behaviour of operational distributed hybrid system models with those of TL to verify temporal statements about traces. We show that \qdtl is sound and relatively complete. We argue that \qdtl can verify practical systems and demonstrate this by studying temporal safety properties in distributed air traffic control. Our primary contributions are as follows:
\begin{itemize}
\item We introduce a logic for specifying and verifying temporal properties of distributed hybrid systems.
\item We present a proof calculus for this logic, which, to the best of our knowledge, is the first verification approach that can handle temporal statements about distributed hybrid systems.
\item We prove that this compositional calculus is a sound and complete axiomatization relative to differential equations.
\item We verify temporal safety properties in a collision avoidance maneuver in distributed air traffic control, where aircraft can appear dynamically.
\end{itemize}

\section{Related Work}\label{sec:2}
Multi-party distributed control has been suggested for car control~\cite{HsuESV91} and air traffic control~\cite{DowekMC05}. Due to limits in verification technology, no formal analysis of temporal statements about the distributed hybrid dynamics has been possible for these systems yet. Analysis results include discrete message handling~\cite{HsuESV91} or collision avoidance for two participants~\cite{DowekMC05}.

The importance of understanding dynamic/reconfigurable distributed hybrid systems was recognized in modeling languages SHIFT~\cite{DeshpandeGV96} and R-Charon~\cite{KratzSPL06}. They focused on simulation/compilation~\cite{DeshpandeGV96} or the development of a semantics~\cite{KratzSPL06}, so that no verification is possible yet.

Other process-algebraic approaches, like $\chi$~\cite{BeekMRRS06}, have been developed for modeling and simulation. Verification is still limited to small fragments that can be translated directly to other verification tools like PHAVer or UPPAAL, which do not support distributed hybrid systems.

Our approach is completely different. It is based on first-order structures and dynamic logic. We focus on developing a logic that supports temporal and non-temporal statements about distributed hybrid dynamics and is amenable to automated theorem proving in the logic itself.

Our previous work and other verification approaches for static hybrid systems cannot verify distributed hybrid systems. Distributed hybrid systems may
have an unbounded and changing number of components/participants, which cannot be represented with any fixed number of dimensions of the state space.

Based on~\cite{Pratt79}, Beckert and Schlager~\cite{BeckertS01} added separate trace modalities to dynamic logic and presented a relatively complete calculus. Their approach only handles discrete state spaces. In contrast, \qdtl works for hybrid programs with continuous and structural/dimensional dynamics.

Davoren and Nerode~\cite{DavorenN00} extended the propositional modal $\mu$-calculus with a semantics in hybrid systems and examine topological aspects. In~\cite{DavorenCM04}, Davoren et al. gave a semantics in general flow systems for a generalisation of $\text{CTL}^{\ast}$~\cite{EmersonH86}. In both cases, the authors of~\cite{DavorenN00} and~\cite{DavorenCM04} provided Hilbert-style calculi to prove formulas that are valid for all systems simultaneously using abstract actions.

The strength of our logic primarily is that it is a first-order dynamic logic: it handles actual hybrid programs rather than only abstract actions of unknown effect. Our calculus directly supports verification of quantified hybrid programs with continuous evolution and structural/dimensional changes. First-order dynamic logic is more expressive and calculi are deductively stronger than other approaches~\cite{BeckertS01,Leivan04}.
\section{Syntax of Quantified Differential Temporal Dynamic Logic}\label{sec:3} 
As a formal logic for verifying temporal specifications of distributed hybrid systems, we introduce {\em quantified differential temporal dynamic logic} (\qdtl). \qdtl extends dynamic logic for reasoning about system runs~\cite{HarelKT00} with many-sorted first-order logic for reasoning about all $(\forall i : \kern -0.17em A \ \phi)$ or some $(\exists i : \kern -0.17em A \ \phi)$ objects of a sort $A$, e.g.,
the sort of all aircraft, and three other concepts:\vspace{.15cm}\\
{\em Quantified hybrid programs}. The behavior of distributed hybrid systems can be described by quantified hybrid programs~\cite{DBLP:conf/csl/Platzer10,DBLP:conf/csl/Platzer10:TR}, which generalize regular programs from dynamic logic~\cite{HarelKT00} to distributed hybrid changes. The distinguish feature of quantified hybrid programs is that they provide uniform discrete transitions, continuous evolutions, and structural/dimension changes along quantified assignments and quantified differential equations, which can be combined by regular control operations.\vspace{.15cm}\\
{\em Modal operators}. Modalities of dynamic logic express statements about all possible behavior $([\alpha]\pi)$ of a system $\alpha$, or about the existence of a trace $(\langle \alpha \rangle \pi)$, satisfying condition $\pi$. Unlike in standard dynamic logic, $\alpha$ is a model of a distributed hybrid system. We use quantified hybrid programs to describe $\alpha$ as in~\cite{DBLP:conf/csl/Platzer10,DBLP:conf/csl/Platzer10:TR}. Yet, unlike in standard dynamic logic~\cite{HarelKT00} or quantified differential dynamic logic (\qdl)~\cite{DBLP:conf/csl/Platzer10,DBLP:conf/csl/Platzer10:TR}, $\pi$ is a {\em trace formula} in \qdtl, and $\pi$ can refer to all states that occur {\em during} a trace using temporal operators.\vspace{.15cm}\\
{\em Temporal operators}. For \qdtl, the temporal trace formula $\Box \phi$ expresses that the formula $\phi$ holds all along a trace selected by $[\alpha]$ or $\langle \alpha \rangle$. For instance, the state formula $\langle \alpha \rangle \Box \phi$ says that the state formula $\phi$ holds at every state along at least one trace of $\alpha$. Dually, the trace formula $\Diamond \phi$ expresses that $\phi$ holds at some point during such a trace. It can occur in a state formula $\langle \alpha \rangle \Diamond \phi$ to express that there is such a state in some trace of $\alpha$, or as $[\alpha]\Diamond\phi$ to say that, along each trace, there is a state satisfying $\phi$. In this paper, the primary focus of attention is on homogeneous combinations of path and trace quantifiers like $[\alpha] \Box \phi$ or $\langle \alpha \rangle \Diamond \phi$.
\subsection{Quantified Hybrid Programs} 
\qdtl supports a (finite) number of object {\em sorts}, e.g., the sort of all aircraft, or the sort of all cars. For
continuous quantities of distributed hybrid systems like positions or velocities,
we add the sort $\mathbb{R}$ for real numbers. {\em Terms} of \qdtl are built from a set of (sorted) function/variable symbols
as in many-sorted first-order logic.
For representing appearance and disappearance of objects while running QHPs, we use an existence function symbol $\mathsf{E}(\cdot)$ that has
value $\mathsf{E}(o) = 1$ if object $o$ exists, and has value $\mathsf{E}(o) = 0$ when object $o$ disappears or has not been created yet.
We use $0, 1, +, -, \cdot$ with the usual notation and fixed semantics for real arithmetic. For
$n \geq 0$ we abbreviate $f(s_1, \ldots, s_n)$ by $f(\boldsymbol{s})$ using vectorial notation and we use
$\boldsymbol{s} = \boldsymbol{t}$ for element-wise equality.

As a system model for distributed hybrid systems, \qdtl uses {\em quantified hybrid programs} (QHP)~\cite{DBLP:conf/csl/Platzer10,DBLP:conf/csl/Platzer10:TR}. The quantified hybrid programs occurring in dynamic modalities of \qdtl are regular programs from dynamic logic~\cite{HarelKT00} to which quantified assignments and quantified differential equation systems for
distributed hybrid dynamics are added. QHPs are defined by the following grammar ($\alpha$, $\beta$ are QHPs, $\theta$ a term, $i$ a variable of sort $A$, $f$ is a function symbol, $\boldsymbol{s}$ is a vector of terms with sorts compatible to $f$, and $\chi$ is a formula of first-order logic):
\[\alpha, \beta ::= \forall i : \kern -0.17em A \ f(\boldsymbol{s}) : = \theta \ | \ \forall i : \kern -0.17em A \ f(\boldsymbol{s})' = \theta \with \chi \ | \ ?\chi \ | \ \alpha \cup \beta \ | \ \alpha;\beta \ | \ \alpha^{\ast}\]
The effect of {\em quantified assignment} $\forall i : \kern -0.3em A \ f(\boldsymbol{s}) : = \theta$ is an instantaneous discrete jump assigning $\theta$ to $f(\boldsymbol{s})$ simultaneously for all objects $i$ of sort $A$. The QHP $\forall i : \kern -0.17em C \ a(i) := a(i) + 1$, for example,
expresses that all cars $i$ of sort $C$ simultaneously increase their acceleration $a(i)$. The effect of {\em quantified differential equation} $ \forall i : \kern -0.17em A \ f(\boldsymbol{s})' = \theta \with \chi$ is a continuous evolution where, for all objects $i$ of sort $A$, all differential equations
$f(\boldsymbol{s})' = \theta$ hold and formula $\chi$ holds throughout the evolution (the state
remains in the region described by $\chi$). The dynamics of QHPs changes the interpretation
of terms over time: $f(\boldsymbol{s})'$ is intended to denote the derivative of the
interpretation of the term $f(\boldsymbol{s})$ over time during continuous evolution, not the
derivative of $f(\boldsymbol{s})$ by its argument $\boldsymbol{s}$. For $f(\boldsymbol{s})'$ to be defined, we assume $f$ is an
$\mathbb{R}$-valued function symbol. For simplicity, we assume that $f$ does not occur in $\boldsymbol{s}$.
In most quantified assignments/differential equations $\boldsymbol{s}$ is just $i$. For instance, the following QHP expresses that
all cars $i$ of sort $C$ drive by $\forall i : \kern -0.17em C \ x(i)'' = a(i)$ such that their position $x(i)$ changes continuously
according to their respective acceleration $a(i)$.

The effect of test $? \chi$ is a {\em skip} (i.e., no change) if formula $\chi$ is
true in the current state and {\em abort} (blocking the system run by a failed assertion),
otherwise. {\em Nondeterministic choice} $\alpha \cup \beta$ is for alternatives in the behavior of
the distributed hybrid system. In the {\em sequential composition} $\alpha ; \beta$, QHP $\beta$ starts
after $\alpha$ finishes ($\beta$ never starts if $\alpha$ continues indefinitely). {\em Nondeterministic
repetition} $\alpha^{\ast}$ repeats $\alpha$ an arbitrary number of times, possibly zero times.

Structural dynamics of distributed hybrid systems corresponds to quantified assignments to function terms and we model the appearance of new participants in the distributed hybrid system, e.g., new aircraft appearing into the local flight scenario, by a program $n := \mathsf{new} \ A$ (see~\cite{DBLP:conf/csl/Platzer10,DBLP:conf/csl/Platzer10:TR} for details).
\subsection{State and Trace Formulas} 
The formulas of \qdtl are defined similarly to first-order dynamic logic plus
many-sorted first-order logic. However, the modalities $[\alpha]$ and $\langle \alpha \rangle$ accept trace formulas that refer to the temporal behavior of {\em all} states along a trace. Inspired by CTL and $\text{CTL}^{\ast}$~\cite{EmersonC82,EmersonH86}, we distinguish between state formulas, which are true or false in states, and trace formulas, which are true or false for system traces.

The {\em state formulas} of \qdtl are defined by the following grammar ($\phi, \psi$ are state formulas, $\pi$ is a trace formula, $\theta_1, \theta_2$ are terms of the same sort, $i$ is a variable of sort $A$, and $\alpha$ is a QHP):
\[\phi, \psi ::= \theta_1 = \theta_2 \ | \ \theta_1 \geq \theta_2 \ | \ \neg \phi \ | \ \phi \wedge \psi \ | \ \forall i : \kern -0.17em A \ \phi \ | \ \exists i : \kern -0.17em A \ \phi \ | \ [\alpha]\pi \ | \ \langle \alpha \rangle \pi\]
We use standard abbreviations to define $\leq, >, <, \vee, \rightarrow$. Sorts $A \not = \mathbb{R}$ have no
ordering and only $\theta_1 = \theta_2$ is allowed. For sort $\mathbb{R}$, we abbreviate $\forall x : \kern -0.17em \mathbb{R} \ \phi$ by $\forall x \phi$.

The {\em trace formulas} of \qdtl are defined by the following grammar ($\pi$ is a trace formula and $\phi$ is a state formula):
\[ \pi ::= \phi \ | \ \Box \phi \ | \ \Diamond \phi \]
Formulas without $\Box$ and $\Diamond$ are {\em non-temporal} \qdl \xspace {\em formulas}. Unlike in CTL, state formulas are true on a trace if they hold for the {\em last} state of a trace, not for the first. Thus, $[\alpha]\phi$ expresses that $\phi$ is true at the end of each trace of $\alpha$. In contrast, $[\alpha]\Box\phi$ expresses that $\phi$ is true all along all states of every trace of $\alpha$. This combination gives a smooth embedding of non-temporal \qdl into \qdtl and makes it possible to define a compositional calculus. Like CTL, \qdtl allows nesting with a branching time semantics~\cite{EmersonC82}, e.g., $[\alpha] \Box ((\forall i : \kern -0.17em C \ x(i) \geq 2) \rightarrow \langle \beta \rangle \Diamond (\forall i : \kern -0.17em C \ x(i) \leq 0))$. In the following, all formulas and terms have to be well-typed. For short notation, we allow conditional terms of the form $\mathsf{if} \kern .17em \phi \kern .17em \mathsf{then} \kern .17em \theta_1 \kern .17em \mathsf{else} \kern .17em \theta_2 \kern .17em \mathsf{fi}$ (where $\theta_1$ and $\theta_2$ have the same sort). This term evaluates to $\theta_1$ if the formula $\phi$ is true and to $\theta_2$ otherwise. We consider formulas with conditional terms as
abbreviations, e.g., $\psi(\mathsf{if}\kern .17em \phi \kern .17em \mathsf{then} \kern .17em \theta_1 \kern .17em \mathsf{else} \kern .17em \theta_2)$ for ($\phi \rightarrow \psi(\theta_1)) \wedge (\neg \phi \rightarrow \psi(\theta_2))$.

\begin{example}
Let $C$ be the sort of all cars. By $x(i)$, we denote the position of car $i$, by $v(i)$ its velocity and by $a(i)$ its acceleration. Then the \qdtl formula
\[ (\forall i : \kern -0.17em C \ x(i) \geq 0) \rightarrow [\forall i : \kern -0.17em C \ x(i)' = v(i), v(i)' = a(i) \with v(i) \geq 0] \Box (\forall i : \kern -0.17em C \ x(i) \geq 0)      \]
says that, if all cars start at a point to the right of the origin and we only allow them to evolve as long as all of them have nonnegative velocity, then they {\em always} stay up to the right of the origin. In this case, the QHP just consists of a quantified differential equation expressing that the position $x(i)$ of car $i$ evolves over time according to the velocity $v(i)$, which evolves according to its acceleration $a(i)$. The constraint $v(i) \geq 0$ expresses that
the cars never move backwards, which otherwise would happen eventually in the case of braking $a(i) < 0$. This formula is indeed valid, and we would be able to use the techniques developed in this paper to prove it.
\end{example}
\section{Semantics of Quantified Differential Temporal Dynamic Logic}\label{sec:4} 
In standard dynamic logic~\cite{HarelKT00} and the logic \qdl~\cite{DBLP:conf/csl/Platzer10,DBLP:conf/csl/Platzer10:TR}, modalities only refer to the final
states of system runs and the semantics is a reachability relation on states:
State $\tau$ is reachable from state $\sigma$ using $\alpha$ if there is a run of $\alpha$ which terminates
in $\tau$ when started in $\sigma$. For \qdtl, however, formulas can refer to intermediate
states of runs as well. Thus, the semantics of a distributed hybrid system $\alpha$ is the set of its
possible {\em traces}, i.e., successions of states that occur during the evolution of $\alpha$.
\subsection{Trace Semantics of Quantified Hybrid Programs} 
A {\em state} $\sigma$ associates an infinite set $\sigma(A)$ of objects with each sort $A$, and
it associates a function $\sigma(f)$ of appropriate type with each function symbol $f$, including $\mathsf{E}(\cdot)$.
For simplicity, $\sigma$ also associates a value $\sigma(i)$ of appropriate type
with each variable $i$. The domain of $\mathbb{R}$ and the interpretation of $0, 1, +, -, \cdot$ is
that of real arithmetic. We assume constant domain for each sort $A$: all states
$\sigma, \tau$ share the same infinite domains $\sigma(A) = \tau(A)$. Sorts $A \not = C$ are disjoint:
$\sigma(A) \cap \sigma(C) = \emptyset$. The set of all states is denoted by $\mathcal{S}$. The state $\sigma_{i}^{e}$ agrees
with $\sigma$ except for the interpretation of variable $i$, which is changed to $e$. In addition, we distinguish a state $\Lambda$ to denote the failure of
a system run when it is {\em aborted} due to a test $?\chi$ that yields {\em false}. In particular, $\Lambda$ can only occur at the end of an aborted system run and marks that there is no further extension.

Distributed hybrid systems evolve along piecewise continuous traces in multi-dimensional space, structural changes, and appearance or disappearance of agents as time passes. Continuous phases are governed by differential equations, whereas discontinuities are caused by discrete jumps. Unlike in discrete cases~\cite{BeckertS01,Pratt79}, traces are not just sequences of states, since distributed hybrid systems pass through uncountably many states even in bounded time. Beyond that, continuous changes are more involved than in pure real-time~\cite{AlurCD90,HenzingerNSY92}, because all variables can evolve along different differential equations. Generalizing the real-time traces of~\cite{HenzingerNSY92}, the following definition captures hybrid behavior by splitting the uncountable succession of states into periods $\nu_i$ that are regulated by the same control law. For discrete jumps, some periods are point flows of duration $0$.

The (trace) semantics of quantified hybrid programs is compositional, that is, the semantics of a complex program is defined as a simple function of the trace semantics of its parts.

\begin{definition}[Hybrid Trace]\label{def:hybridtrace}
A {\em trace} is a (non-empty) finite or infinite sequence $\nu = (\nu_0, \nu_1, \nu_2, \ldots)$ of functions $\nu_k : [0, r_k] \rightarrow \mathcal{S}$ with respective durations $r_k \in \mathbb{R}$ (for $k \in \mathbb{N}$). A {\em position} of $\nu$ is a pair $(k, \zeta)$ with $k \in \mathbb{N}$ and $\zeta$ in the interval $[0, r_k]$; the state of $\nu$ at $(k, \zeta)$ is $\nu_k(\zeta)$. Positions of $\nu$ are ordered lexicographically by
$(k, \zeta) \prec (m, \xi)$ iff either $k < m$, or $k = m$ and $\zeta < \xi$. Further, for a state $\sigma \in \mathcal{S}$, $\hat{\sigma} : 0 \mapsto \sigma$ is the {\em point flow} at $\sigma$ with duration $0$. A trace {\em terminates} if it is a finite sequence $(\nu_0, \nu_1, \ldots, \nu_n)$ and $\nu_n(r_n) \not = \Lambda$. In that case, the last state $\nu_n(r_n)$ is denoted by {\em last}{\kern 0.25em}$\nu$. The first state $\nu_0(0)$ is denoted by {\em first}{\kern 0.25em}$\nu$.
\end{definition}
Unlike in~\cite{AlurCD90,HenzingerNSY92}, the definition of traces also admits finite traces of bounded duration, which is necessary for compositionality of traces in $\alpha;\beta$. The semantics of quantified hybrid programs $\alpha$ as the set $\mu(\alpha)$ of its possible traces depends on valuations $\sigma \llbracket \cdot \rrbracket$ of formulas and terms at intermediate states $\sigma$. The valuation of formulas will be defined in Definition~\ref{def:semanticsofformulas}. Especially, we use $\sigma_{i}^{e}\llbracket \cdot \rrbracket$ to denote the valuations of terms and formulas in state $\sigma_{i}^{e}$, i.e., in state $\sigma$ with $i$ interpreted as $e$.

\begin{definition}[Trace Semantics of Quantified Hybrid Programs]\label{def:tracesemanticsofQHP}
 The {\em trace semantics}, $\mu(\alpha)$, {\em of a quantified hybrid program} $\alpha$, is the set of all its possible hybrid traces and is defined inductively as follows:
\begin{enumerate}
\item $\mu(\forall i : \kern -0.17em A \ f(\boldsymbol{s}) := \theta) = \{ (\hat{\sigma}, \hat{\tau}) : \sigma \in \mathcal{S}$ and state $\tau$ is identical to $\sigma$ except that at each position $\boldsymbol{o}$ of $f$: if $\sigma_i^{e} \llbracket \boldsymbol{s} \rrbracket = \boldsymbol{o}$ for some object $e \in \sigma(A)$, then $\tau(f)(\sigma_i^e\llbracket \boldsymbol{s} \rrbracket) = \sigma_i^e \llbracket \theta \rrbracket.\}$
 \item $\mu(\forall i : \kern -0.17em A \ f(\boldsymbol{s})' = \theta \with \chi) = \{ (\varphi) : 0 \leq r \in \mathbb{R} \ \text{and } \varphi: [0, r] \rightarrow \mathcal{S}$ is a function satisfying the following conditions. At each time $t \in [0,r]$, state $\varphi(t)$ is identical to $\varphi(0)$, except that at each position $\boldsymbol{o} \ \text{of} \ f: \ \text{if} \ \sigma_i^{e} \llbracket \boldsymbol{s} \rrbracket = \boldsymbol{o}$ for some object $e \in \sigma(A)$, then, at each time $\zeta \in [0,r]$:
     \begin{itemize}
     \item The differential equations hold and derivatives exist (trivial for $r = 0$):
     \[\inferencerule{\mathsf{d} ({\varphi(t)}^{e}_{i}\llbracket f(\boldsymbol{s}) \rrbracket)}{\mathsf{d} t}(\zeta) = ({\varphi(\zeta)}^{e}_{i}\llbracket \theta \rrbracket) \]
     \item The evolution domains is respected: ${\varphi(\zeta)}^{e}_{i} \llbracket \chi \rrbracket = \text{true}.\}$
     \end{itemize}
 \item $\mu(?\chi) = \{ (\hat{\sigma}) : \sigma \llbracket \chi \rrbracket = \text{true} \} \cup \{ (\hat{\sigma}, \hat{\Lambda}): \sigma \llbracket \chi \rrbracket = \text{false}\}$
 \item $\mu(\alpha \cup \beta) = \mu(\alpha) \cup \mu(\beta)$
 \item $\mu(\alpha;\beta) = \{ \nu \circ \varsigma : \nu \in \mu(\alpha), \varsigma \in \mu(\beta) \ \text{when} \ \nu \circ \varsigma \ \text{is defined}\}$; the composition of $\nu = (\nu_0, \nu_1, \nu_2, \ldots)$ and $\varsigma = (\varsigma_0, \varsigma_1, \varsigma_2, \ldots)$ is
     \begin{equation*}
\nu \circ \varsigma = \left\{
\begin{array}{ll}
(\nu_0, \ldots, \nu_n, \varsigma_0, \varsigma_1, \ldots) & \text{if } \nu \ \text{terminates at } \nu_n \ \text{and } \emph{\text{last}} {\kern 0.25em}\nu = \emph{\text{first}}{\kern 0.25em}\varsigma\\
\nu & \text{if } \nu \ \text{does not terminate}\\
\text{not defined} & \text{otherwise}
\end{array} \right.
\end{equation*}
 \item $\mu(\alpha^{\ast}) = \bigcup_{n \in \mathbb{N}} \mu(\alpha^{n})$, where $\alpha^{n+1} := (\alpha^n;\alpha)$ for $n \geq 1$, as well as $\alpha^1 := \alpha$ and $\alpha^{0} := (?\text{true})$.
\end{enumerate}
\end{definition}
\subsection{Valuation of State and Trace Formulas} 
\begin{definition}[Valuation of Formulas]\label{def:semanticsofformulas}
 The valuation of state and trace formulas is defined respectively. For state formulas, the {\em valuation} $\sigma \llbracket \cdot \rrbracket$ with respect to state $\sigma$ is defined as follows:
\begin{enumerate}
\item $\sigma \llbracket \theta_1 = \theta_2 \rrbracket = $ true  iff $\sigma \llbracket \theta_1 \rrbracket = \sigma \llbracket \theta_2 \rrbracket$; accordingly for $\geq$.
\item $\sigma \llbracket \phi \wedge \psi \rrbracket =$ true iff $\sigma \llbracket \phi \rrbracket =$ true and $\sigma \llbracket \psi \rrbracket =$ true; accordingly for $\neg$.
\item $\sigma \llbracket \forall i : \kern -0.17em A \ \phi \rrbracket =$ true iff $\sigma_i^e \llbracket \phi \rrbracket =$ true for all objects $e \in \sigma(A)$.
\item $\sigma \llbracket \exists i : \kern -0.17em A \ \phi \rrbracket =$ true iff $\sigma_i^e \llbracket \phi \rrbracket =$ true for some object $e \in \sigma(A)$.
\item $\sigma \llbracket [\alpha] \pi \rrbracket =$ true iff for each trace $\nu \in \mu(\alpha)$ that starts in {\em first}{\kern 0.25em}$\nu = \sigma$, if $\nu \llbracket \pi \rrbracket$ is defined, then $\nu \llbracket \pi \rrbracket =$ true.
\item $\sigma \llbracket \langle \alpha \rangle \pi \rrbracket =$ true iff there is a trace $\nu \in \mu(\alpha)$ starting in {\em first}{\kern 0.25em}$\nu = \sigma$ such that $\nu \llbracket \pi \rrbracket$ is defined and $\nu \llbracket \pi \rrbracket =$ true.

\end{enumerate}

For trace formulas, the {\em valuation} $\nu \llbracket \cdot \rrbracket$ with respect to trace $\nu$ is defined as follows:
\begin{enumerate}
\item If $\phi$ is a state formula, then $\nu \llbracket \phi \rrbracket =$ {\em last}{\kern 0.25em}$\nu \llbracket \phi \rrbracket$ if $\nu$ terminates, whereas $\nu \llbracket \phi \rrbracket$ is not defined if $\nu$ does not terminate.
\item $\nu \llbracket \Box \phi \rrbracket =$ true iff $\nu_{k}(\zeta) \llbracket \phi \rrbracket =$ true for all positions $(k, \zeta)$ of $\nu$ with $\nu_k(\zeta) \not = \Lambda$.
\item $\nu \llbracket \Diamond \phi \rrbracket =$ true iff $\nu_{k}(\zeta) \llbracket \phi \rrbracket =$ true for some position $(k, \zeta)$ of $\nu$ with $\nu_k(\zeta) \not = \Lambda$.

\end{enumerate}
\end{definition}
As usual, a (state) formula is {\em valid} if it is true in all states. Further for (state) formula $\phi$ and state $\sigma$ we write $\sigma \models \phi$ iff $\sigma \llbracket \phi \rrbracket$ = {\em true}. We write $\sigma \not \models \phi$ iff $\sigma \llbracket \phi \rrbracket$ = {\em false}. Likewise, for trace formula $\pi$ and trace $\nu$ we write $\nu \models \pi$ iff $\nu \llbracket \pi \rrbracket$ = {\em true} and $\nu \not \models \pi$ iff $\nu \llbracket \pi \rrbracket$ = {\em false}. In particular, we only write $\nu \models \pi$ or $\nu \not \models \pi$ if $\nu \llbracket \pi \rrbracket$ is defined, which it is not the case if $\pi$ is a state formula and $\nu$ does not terminate.
\subsection{Conservative Temporal Extension}
 The following result shows that the extension of \qdtl by temporal operators does not change the meaning of non-temporal \qdl formulas. The trace semantics given in Definition~\ref{def:semanticsofformulas} is equivalent to the final state reachability relation semantics~\cite{DBLP:conf/csl/Platzer10,DBLP:conf/csl/Platzer10:TR} for the sublogic \qdl of \qdtl.

\begin{proposition}\label{prop:cextension}
The logic {\em \qdtl} is a {\em conservative extension} of non-temporal {\em \qdl},
i.e., the set of valid {\em \qdl} formulas is the same with respect to transition reachability
semantics of {\em \qdl}~\cite{DBLP:conf/csl/Platzer10,DBLP:conf/csl/Platzer10:TR} as with respect to the trace semantics of {\em \qdtl} (Definition~\ref{def:semanticsofformulas}).
\end{proposition}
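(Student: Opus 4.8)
The plan is to reduce the proposition to a single correspondence between the trace semantics of Definition~\ref{def:tracesemanticsofQHP} and the final-state reachability relation of \qdl, which I will denote $\rho(\alpha) \subseteq \mathcal{S} \times \mathcal{S}$. The guiding observation is that a \qdl formula is, by definition, a \qdtl formula whose trace subformulas $\pi$ contain neither $\Box$ nor $\Diamond$; hence every $\pi$ occurring in a modality is itself a state formula $\phi$. By clause~1 of the trace-formula valuation in Definition~\ref{def:semanticsofformulas}, $\nu \llbracket \phi \rrbracket$ is then defined exactly when $\nu$ terminates, in which case it equals $\emph{last}\,\nu \llbracket \phi \rrbracket$. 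Thus the ``if defined'' guards in clauses~5 and~6 for $[\alpha]\pi$ and $\langle \alpha \rangle \pi$ quantify precisely over the \emph{terminating} traces of $\alpha$ and evaluate $\phi$ at their final states, which is exactly the shape of the reachability semantics of \qdl.

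First, I would establish the key lemma: for every QHP $\alpha$ and all states $\sigma, \tau \in \mathcal{S}$,
\[ (\sigma, \tau) \in \rho(\alpha) \quad \text{iff} \quad \text{some terminating } \nu \in \mu(\alpha) \text{ has } \emph{first}\,\nu = \sigma \text{ and } \emph{last}\,\nu = \tau. \]
This is proved by induction on the structure of $\alpha$ along the QHP grammar. The atomic cases (quantified assignment, quantified differential equation, and test) follow by comparing the respective clauses of Definition~\ref{def:tracesemanticsofQHP} with the corresponding clauses of the \qdl transition semantics; note in particular that a failing test $?\chi$ yields the trace $(\hat{\sigma}, \hat{\Lambda})$, which does \emph{not} terminate and hence contributes no pair to the right-hand side, matching the absence of a transition in \qdl. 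The case $\alpha \cup \beta$ is immediate from $\mu(\alpha \cup \beta) = \mu(\alpha) \cup \mu(\beta)$ and $\rho(\alpha \cup \beta) = \rho(\alpha) \cup \rho(\beta)$.

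Second, with the lemma in hand, I would prove the proposition by structural induction on the \qdl formula $\phi$. All first-order cases are routine and identical in both semantics; the only cases that use the lemma are the modalities. For $[\alpha]\phi$, clause~5 makes $\sigma \llbracket [\alpha]\phi \rrbracket$ true iff every terminating $\nu \in \mu(\alpha)$ starting in $\sigma$ satisfies $\emph{last}\,\nu \llbracket \phi \rrbracket = \text{true}$; by the lemma this is exactly ``$\tau \llbracket \phi \rrbracket = \text{true}$ for all $\tau$ with $(\sigma, \tau) \in \rho(\alpha)$'', which, using the induction hypothesis on $\phi$, is the \qdl transition semantics of $[\alpha]\phi$. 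The dual argument handles $\langle \alpha \rangle \phi$ via clause~6, where a terminating witnessing trace corresponds to a $\rho(\alpha)$-successor. Since validity is truth in all states under each semantics, agreement of the valuations at every $\sigma$ yields equality of the two sets of valid formulas.

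I expect the main obstacle to be the inductive cases for sequential composition and nondeterministic repetition in the key lemma, since these are where termination and the first/last correspondence interact nontrivially. For $\alpha;\beta$ one must show that a terminating trace of $\mu(\alpha;\beta)$ is precisely a composition $\nu \circ \varsigma$ of a terminating $\nu \in \mu(\alpha)$ with a terminating $\varsigma \in \mu(\beta)$ satisfying $\emph{last}\,\nu = \emph{first}\,\varsigma$ — so that $\emph{first}(\nu \circ \varsigma) = \emph{first}\,\nu$ and $\emph{last}(\nu \circ \varsigma) = \emph{last}\,\varsigma$ — matching the relational composition $\rho(\alpha;\beta) = \rho(\alpha) \circ \rho(\beta)$; the nonterminating branch of $\nu \circ \varsigma$ must be checked to contribute nothing. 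For $\alpha^{\ast}$ one then lifts this to $\mu(\alpha^{\ast}) = \bigcup_n \mu(\alpha^n)$ and argues, by an inner induction on $n$, that terminating traces of $\alpha^n$ realize exactly the $n$-fold composition $\rho(\alpha)^n$, whose union over $n$ is $\rho(\alpha^{\ast})$.
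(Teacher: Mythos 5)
Your proposal is correct and follows essentially the same route as the paper: the same key lemma identifying $\rho(\alpha)$ with the first/last-state pairs of terminating traces in $\mu(\alpha)$ (proved by structural induction on $\alpha$, with the same treatment of failed tests, composition, and iteration), followed by the same structural induction on \qdl formulas whose only nontrivial cases are the modalities. The paper phrases the modality cases contrapositively, but the argument is the same in substance.
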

\section{Safety Properties in Distributed Air Traffic Control}\label{sec:5} 
In air traffic control, collision avoidance maneuvers~\cite{DowekMC05,TomlinPS98} are used to resolve conflicting flight paths that arise during free flight. We
consider the roundabout collision avoidance maneuver for air traffic control~\cite{TomlinPS98}. In the literature, formal verification of the hybrid dynamics of air traffic control focused on a fixed number of aircraft, usually two. In reality, many more aircraft are in the same flight corridor, even if not all of them participate in the same maneuver. They may be involved in multiple distributed maneuvers at the same time, however. Perfect global trajectory planning quickly becomes infeasible then. The verification itself also becomes much more complicated for three aircraft already. Explicit replication of the system dynamics $n$ times is computationally infeasible for larger $n$. Yet, collision avoidance maneuvers need to work for an (essentially) unbounded number of aircraft. Because global trajectory planning is infeasible, the appearance of other aircraft into a local collision avoidance maneuver always has to be expected and managed safely. See Fig.~\ref{fig:roundabout} for a general illustration of roundabout-style collision avoidance maneuvers and the phenomenon of dynamic appearance of some new aircraft $z$ into the horizon of relevance.
\begin{wrapfigure}{r}{0.30\textwidth}
\vspace{-30pt}
  \begin{center}
    \includegraphics[width=0.30\textwidth]{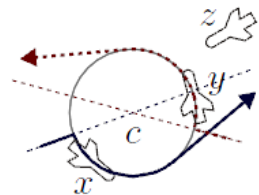}
  \end{center}
  \vspace{-.5cm}
  \caption{\scriptsize{Roundabout collision avoidance maneuver with new appearance}}
  \label{fig:roundabout}
  \vspace{-10pt}
\end{wrapfigure}
\quad The resulting flight control system has several characteristics of hybrid dynamics. But it is not a hybrid system and does not even have a fixed finite number of variables in a fixed finite-dimensional state space. The system forms a distributed hybrid system, in which all aircraft fly at the same time and new aircraft may appear from remote areas into the local flight scenario. Let $A$ be the sort of all aircraft. Each aircraft $i$ has a position $x(i) = (x_1(i), x_2(i))$ and a velocity vector $d(i) = (d_1(i), d_2(i))$. We model the continuous dynamics of an aircraft $i$ that follows a flight curve with an angular velocity $\omega(i)$ by the (function) differential equation:
\begin{equation}
{x_1(i)}' = d_1(i), {x_2(i)}' = d_2(i), {d_1(i)}' = - \omega(i)d_2(i), {d_2(i)}' = \omega(i)d_1(i) \tag{${\mathcal{F}}_{\omega(i)}(i)$}
\end{equation}
This differential equation, which we denote by ${\mathcal{F}}_{\omega(i)}(i)$, is the standard equation for curved flight from the literature~\cite{TomlinPS98}, but lifted to function symbols that are parameterized by aircraft $i$. Now the quantified differential equation $\forall i : \kern -0.17em A \ {\mathcal{F}}_{\omega(i)}(i)$ characterizes that {\em all} aircraft $i$ fly along their respective (function) differential equation ${\mathcal{F}}_{\omega(i)}(i)$ according to their respective angular velocities $\omega(i)$ at the same time. This quantified differential equation captures what no finite-dimensional differential equation system could ever do. It characterizes the simultaneous movement of an unbounded, arbitrary, and even growing or shrinking set of aircraft.

Two aircraft $i$ and $j$ have violated the safe separation property if they falsify the following formula
\[\mathcal{P}(i,j) \equiv i = j \vee (x_1(i) - x_1(j))^{2} + (x_2(i) - x_2(j))^{2} \geq p^2 \]
which says that aircraft $i$ and $j$ are either identical or separated by at least the protected zone $p$ (usually 5mi). For the aircraft control system to be safe, all aircraft have to be safely separated, i.e., need to satisfy $\forall i, j : \kern -0.17em A \ \mathcal{P}(i,j)$. It is crucial that this formula holds at {\em every} point in time during the system evolution, not only at its beginning or at its end. Hence, we need to consider temporal safety properties. For instance, \qdtl can analyze the following temporal safety properties of a part of the distributed roundabout collision avoidance maneuver for air traffic control:
\begin{equation}
\forall i, j : \kern -0.17em A \ \mathcal{P}(i,j) \wedge \forall i, j : \kern -0.17em A \ \mathcal{T}(i,j) \rightarrow [\forall i : \kern -0.17em A \  {\mathcal{F}}_{\omega(i)}(i)] \Box \kern 0.08em \forall i, j : \kern -0.17em A \ \mathcal{P}(i,j) \label{eq:safety}
\end{equation}
\begin{equation}
\begin{array}{l}
\kern 1.1cm \forall i, j : \kern -0.17em A \ \mathcal{P}(i,j) \wedge \forall i, j : \kern -0.17em A \ \mathcal{T}(i,j) \rightarrow \\
 \lbrack \forall i : \kern -0.17em A \ t := 0; \forall i : \kern -0.17em A \  {\mathcal{F}}_{\omega(i)}(i), t' = 1 \with \forall i : \kern -0.17em A \ t \leq T; ?(\forall i : \kern -0.17em A \ t = T)] \Box \kern 0.08em \forall i, j : \kern -0.17em A \ \mathcal{P}(i,j) \label{eq:safetyproperty}
\end{array}
\end{equation}
where $\mathcal{T}(i,j) \equiv d_1(i) - d_1(j) = - \omega(x_2(i) - x_2(j)) \wedge d_2(i) - d_2(j) = \omega(x_1(i) - x_1(j))$, $t$ is a clock variable, and $T$ is some bounded time.

The temporal safety invariant in (\ref{eq:safety}) expresses that the circle phase of roundabout maneuver {\em always} stays collision-free indefinitely for an arbitrary number of aircraft. That is the most crucial part because we have to know the aircraft {\em always} remain safe during the actual roundabout collision avoidance circle. The condition $\forall i, j : \kern -0.17em A \ \mathcal{T}(i,j)$ characterizes compatible tangential maneuvering choices. Without a condition like $\mathcal{T}(i,j)$, roundabouts can be unsafe~\cite{APlatzer10}. For a systematic derivation of how to construct $\mathcal{T}(i,j)$, we refer to the work~\cite{APlatzer10}. As a variation of (\ref{eq:safety}), the temporal safety property in (\ref{eq:safetyproperty}) states that, for an arbitrary number of aircraft, the circle procedure of roundabout maneuver cannot produce collisions at any point in its bounded duration $T$. This variation restricts the continuous evolution to take exactly $T$ time units (the evolution domain region restricts the evolution to $t \leq T$ and the subsequent test to $?(\forall i : \kern -0.17em A \ t = T)$) and no intermediate state is visible as a final state anymore. Thus, the temporal modality $\Box$ in (\ref{eq:safetyproperty}) is truly necessary. We will use the techniques developed in this paper to verify these temporal safety properties in the distributed roundabout flight collision avoidance maneuver.
\section{Proof Calculus for Temporal Properties}\label{sec:6} 
In this section, we introduce a sequent calculus for verifying temporal specifications of distributed hybrid systems in \qdtl. With the basic idea being to perform a symbolic decomposition, the calculus transforms quantified hybrid programs successively into simpler logical formulas describing their effects. Statements about the temporal behavior of a quantified hybrid program are successively reduced to corresponding non-temporal statements about the intermediate states.
\subsection{Proof Rules} 
In Fig.~\ref{fig:rules}, we present a proof calculus for \qdtl that inherits the proof rules of \qdl from~\cite{DBLP:conf/csl/Platzer10,DBLP:conf/csl/Platzer10:TR} and adds new proof rules for temporal modalities. We use the sequent notation informally for a systematic proof structure. A {\em sequent} is of the form $\Gamma \rightarrow \Delta$, where the {\em antecedent} $\Gamma$ and {\em succedent} $\Delta$ are finite sets of formulas. The semantics of $\Gamma \rightarrow \Delta$ is that of the formula $\bigwedge_{\phi \in \Gamma} \phi \rightarrow \bigvee_{\psi \in \Delta} \psi$ and will be treated as an abbreviation. As usual in sequent calculus, the proof rules are applied backwards from the {\em conclusion} (goal below horizontal bar) to the {\em premise}s (subgoals above bar).\vspace{-.3cm}
\begin{figure}[!ht]
\hspace*{-0.32in}
\begin{minipage}{1.0\linewidth}
\setcounter{mpfootnote}{\value{footnote}}
\renewcommand{\thempfootnote}{\arabic{mpfootnote}}
\begin{tabular}{l l l l l l}
{\scriptsize ({\em ax}) $\inferencerule{}{\phi \rightarrow \phi}$} &
{\scriptsize ($\neg$r) $\inferencerule{\phi \rightarrow }{ \rightarrow \neg \phi}$} &
{\scriptsize ($\neg$l) $\inferencerule{\rightarrow \phi}{ \neg \phi \rightarrow}$} &
{\scriptsize ($\wedge$r) $\inferencerule{\rightarrow \phi \quad \rightarrow \psi}{\rightarrow \phi \wedge \psi}$}  &
{\scriptsize ($\wedge$l) $\inferencerule{\phi, \psi \rightarrow}{\phi \wedge \psi \rightarrow}$} &
{\scriptsize ({\em cut}) $\inferencerule{\rightarrow \phi \quad \phi \rightarrow}{ \rightarrow}$}\\
\vspace{-2.6mm} \\
{\scriptsize ($[;]$) $\inferencerule{[\alpha][\beta]\phi}{[\alpha;\beta] \phi}$} &
{\scriptsize ($\langle ; \rangle$) $\inferencerule{\langle \alpha \rangle \langle \beta \rangle \phi}{\langle \alpha ; \beta \rangle \phi}$} &
{\scriptsize ($[\cup]$) $\inferencerule{[\alpha]\phi \wedge [\beta]\phi}{[\alpha \cup \beta] \phi}$} &
{\scriptsize ($\langle \cup \rangle$) $\inferencerule{\langle \alpha \rangle \phi \vee \langle \beta \rangle \phi}{\langle \alpha \cup \beta \rangle \phi}$} &
{\scriptsize ($[?]$) $\inferencerule{ \chi \rightarrow \phi}{[?\chi]\phi}$} &
{\scriptsize ($\langle ? \rangle$) $\inferencerule{ \chi \wedge \phi}{\langle ?\chi \rangle \phi}$}
\end{tabular}
\begin{tabular}{l}
{\scriptsize ($[']$) $\inferencerule{\forall t \geq 0 ((\forall 0 \leq \tilde{t} \leq t [\forall i : \kern -0.17em A \ \mathcal{S}(\tilde{t})] \chi) \rightarrow [ \forall i : \kern -0.17em A \ \mathcal{S}(t)]\phi)}{[\forall i : \kern -0.17em A \ f(\boldsymbol{s})' = \theta \with \chi] \phi}$ \footnote[1]{\scriptsize{$t, \tilde{t}$ are new variables, $\forall i : \kern -0.17em A \ \mathcal{S}(t)$ is the quantified assignment $\forall i : \kern -0.17em A \ f(\boldsymbol{s}) := y_{\boldsymbol{s}}(t)$ with solutions $y_{\boldsymbol{s}}(t)$ of the (injective) differential equations and $f(\boldsymbol{s})$ as initial values. See~\cite{DBLP:conf/csl/Platzer10,DBLP:conf/csl/Platzer10:TR} for the definition of a {\em injective} quantified assignment or quantified differential equation.}}}  \quad \quad
{\scriptsize ($\langle'\rangle$) $\inferencerule{\exists t \geq 0 ((\forall 0 \leq \tilde{t} \leq t \langle \forall i : \kern -0.17em A \ \mathcal{S}(\tilde{t})\rangle \chi) \wedge \langle \forall i : \kern -0.17em A \ \mathcal{S}(t)\rangle\phi)}{\langle \forall i : \kern -0.17em A \ f(\boldsymbol{s})' = \theta \with \chi \rangle \phi}$\footnotemark[1]} \\
\vspace{-2.6mm}\\
{\scriptsize ($[:=]$) $\inferencerule{\mathsf{if} \ \exists i : \kern -0.17em A \ \boldsymbol{s} = [\mathcal{A}] \boldsymbol{u} \ \mathsf{then} \ \forall i : \kern -0.17em A \ (\boldsymbol{s} = [\mathcal{A}] \boldsymbol{u} \rightarrow \phi(\theta)) \ \mathsf{else} \ \phi(f([\mathcal{A}] \boldsymbol{u})) \ \mathsf{fi}}{\phi([\forall i : A\ f(\boldsymbol{s}) := \theta]f(\boldsymbol{u}))}$\footnote[2]{\scriptsize{Occurrence $f(\boldsymbol{u})$ in $\phi(f(\boldsymbol{u}))$ is not in scope of a modality (admissible substitution) and we abbreviate assignment $\forall i : \kern -0.17em A \ f(\boldsymbol{s}) := \theta$ by $\mathcal{A}$, which is assumed to be injective.}}} \\
\vspace{-2.6mm}\\
{\scriptsize ($\langle := \rangle$) $\inferencerule{\mathsf{if} \ \exists i : \kern -0.17em A \ \boldsymbol{s} = \langle \mathcal{A}\rangle \boldsymbol{u} \ \mathsf{then} \ \exists i : \kern -0.17em A \ (\boldsymbol{s} = \langle \mathcal{A}\rangle \boldsymbol{u} \wedge \phi(\theta)) \ \mathsf{else} \ \phi(f(\langle \mathcal{A}\rangle \boldsymbol{u})) \ \mathsf{fi}}{\phi(\langle \forall i : \kern -0.17em A \ f(\boldsymbol{s}) := \theta \rangle f(\boldsymbol{u}))}$\footnotemark[2]}
\end{tabular}
\begin{tabular}{l l l l}
{\scriptsize ({\em skip}) $\inferencerule{\Upsilon([\forall i : \kern -0.17em A \ f(\boldsymbol{s}) := \theta]\boldsymbol{u})}{[\forall i : \kern -0.17em A \ f(\boldsymbol{s}) := \theta] \Upsilon(\boldsymbol{u})}$\footnote[3]{\scriptsize{$f \not = \Upsilon$ and the quantified assignment $\forall i : \kern -0.17em A \ f(\boldsymbol{s}) := \theta$ is injective. The same rule applies for $\langle \forall i : \kern -0.17em A \ f(\boldsymbol{s}) := \theta \rangle$ instead of $[\forall i : \kern -0.17em A \ f(\boldsymbol{s}) := \theta]$.}}} &
{\scriptsize ($[:\kern -0.3em \ast]$) $\inferencerule{\forall j : \kern -0.17em A \ \phi(\theta)}{[\forall j : \kern -0.17em A \ n := \theta]\phi(n)}$} &
{\scriptsize ($\langle :\kern -0.3em \ast \rangle$) $\inferencerule{\exists j : \kern -0.17em A \ \phi(\theta)}{\langle \forall j : \kern -0.17em A \ n := \theta \rangle \phi(n)}$} &
{\scriptsize ({\em ex}) $\inferencerule{\emph{\text{true}}}{\exists n : \kern -0.17em A \ \mathsf{E}(n) = 0}$ } \\
\vspace{-2.6mm} \\
{\scriptsize ($\forall$r) $\inferencerule{\Gamma \rightarrow \phi(f(X_1, \ldots, X_n)), \Delta}{\Gamma \rightarrow \forall x \phi(x), \Delta}$\footnote[4]{\scriptsize{$f$ is a new (Skolem) function and $X_1, \ldots, X_n$ are all free logical variables of $\forall x \kern 0.1em \phi(x)$.}}} &
{\scriptsize ($\exists$r) $\inferencerule{\Gamma \rightarrow \phi(\theta), \exists x \phi(x), \Delta}{\Gamma \rightarrow \exists x \phi(x), \Delta}$\footnote[5]{\scriptsize{$\theta$ is an abbreviate term, often a new logical variable.}}} &
{\scriptsize ($\forall$l) $\inferencerule{\Gamma, \phi(\theta), \forall x \phi(x) \rightarrow \Delta}{\Gamma, \forall x \phi(x) \rightarrow \Delta}$\footnotemark[5]}&
{\scriptsize ($\exists$l) $\inferencerule{\Gamma, \phi(f(X_1, \ldots, X_n)) \rightarrow \Delta}{\Gamma, \exists x \phi(x) \rightarrow \Delta}$\footnotemark[4]}
\end{tabular}
\begin{tabular}{l l}
{\scriptsize (i$\forall$) $\inferencerule{\text{QE}(\forall X, Y (\mathsf{if} \kern 0.17 em \boldsymbol{s} = \boldsymbol{t} \kern 0.17em \mathsf{then} \kern 0.17em \Phi(X) \rightarrow \Psi(X) \kern 0.17em \mathsf{else} \kern 0.17em \Phi(X) \rightarrow \Psi(Y) \kern 0.17em \mathsf{fi}))}{\Phi(f(\boldsymbol{s})) \rightarrow \Psi(f(\boldsymbol{t}))}$\footnote[6]{\scriptsize{$X, Y$ are new variables of sort $\mathbb{R}$. QE needs to be applicable in the premises.}}} &
{\scriptsize (i$\exists$) $\inferencerule{\text{QE}(\exists X \bigwedge_{i}(\Phi_i \rightarrow \Psi_i))}{\Phi_1 \rightarrow \Psi_1 \ \ldots \ \Phi_n \rightarrow \Psi_n}$\footnote[7]{\scriptsize{    Among all open branches, the free (existential) logical variable $X$ of sort $\mathbb{R}$ only occurs in the branch $\Phi_i \rightarrow \Psi_i$.  QE needs to be defined for the formula in the premises, especially, no Skolem dependencies on $X$ occur.                              }}}
\end{tabular}
\begin{tabular}{l l l l}
{\scriptsize ($[]${\em gen}) $\inferencerule{\phi \rightarrow \psi}{\Gamma, [\alpha]\phi \rightarrow [\alpha] \psi, \Delta}$} &
{\scriptsize ($\langle \rangle${\em gen}) $\inferencerule{\phi \rightarrow \psi}{\Gamma, \langle \alpha \rangle \phi \rightarrow \langle \alpha \rangle \psi, \Delta}$} &
{\scriptsize ({\em ind}) $\inferencerule{\phi \rightarrow [\alpha] \phi}{\Gamma, \phi \rightarrow [\alpha^{\ast}]\phi, \Delta}$} &
{\scriptsize ({\em con}) $\inferencerule{v > 0 \wedge \varphi(v) \rightarrow \langle \alpha \rangle \varphi(v - 1)}{\Gamma, \exists v \kern 0.17em \varphi(v) \rightarrow \langle \alpha^{\ast} \rangle \exists v \leq 0 \kern 0.11em \varphi(v), \Delta}$\footnote[8]{\scriptsize{        logical variable $v$ does not occur in $\alpha$.            }}}
\end{tabular}
\begin{tabular}{l l}
{\scriptsize ({\em DI}) $\inferencerule{\chi \rightarrow [\forall i : \kern -0.17em A \ f(\boldsymbol{s})' := \theta] D(\phi)}{\phi \rightarrow [\forall i : \kern -0.17em A \ f(\boldsymbol{s})' = \theta \with \chi]\phi}$\footnote[9]{\scriptsize{The operator $D$, as defined in~\cite{DBLP:conf/csl/Platzer10,DBLP:conf/csl/Platzer10:TR}, is used to computer syntactic total derivations of formulas algebraically.}}} &
{\scriptsize ({\em DC}) $\inferencerule{\phi \rightarrow [\forall i : \kern -0.17em A \ f(\boldsymbol{s})' = \theta \with \chi]\psi \quad \phi \rightarrow [\forall i : \kern -0.17em A \ f(\boldsymbol{s})' = \theta \with \chi \wedge \psi]\phi}{\phi \rightarrow [\forall i : \kern -0.17em A \ f(\boldsymbol{s})' = \theta \with \chi]\phi}$}
\end{tabular}
\begin{tabular}{l l l l}
{\scriptsize ($[\cup]${\tiny $\Box$}) $\inferencerule{[\alpha] \pi \wedge [\beta] \pi}{[\alpha \cup \beta]\pi}$\footnote[10]{\scriptsize{  $\pi$ is a trace formula, whereas $\phi$ and $\psi$ are (state) formulas. Unlike $\phi$ and $\psi$, the trace formula $\pi$ may thus begin with a temporal modality $\Box$ or $\Diamond$.       }}} &
{\scriptsize ($\langle \cup \rangle \diamond$) $\inferencerule{\langle \alpha \rangle \pi \vee \langle \beta \rangle \pi}{\langle \alpha \cup \beta \rangle \pi}$\footnotemark[10]} &
{\scriptsize ($[;]${\tiny $\Box$}) $\inferencerule{[\alpha]\Box\phi \wedge [\alpha][\beta]\Box\phi}{[\alpha; \beta]\Box \phi}$}  &
{\scriptsize ($\langle ; \rangle \diamond$) $\inferencerule{\langle \alpha \rangle \Diamond \phi \vee \langle \alpha \rangle \langle \beta \rangle \Diamond \phi}{\langle \alpha; \beta \rangle \Diamond \phi}$} \\
{\scriptsize ($[?]${\tiny $\Box$}) $\inferencerule{\phi}{[?\chi] \Box \phi}$} &
{\scriptsize ($\langle ? \rangle \diamond$) $\inferencerule{\phi}{\langle ? \chi \rangle \Diamond \phi}$} &
   \quad &
   \quad \\
{\scriptsize ($[:=]${\tiny $\Box$}) $\inferencerule{\phi \wedge [\forall i : \kern -0.17em A \ f(\boldsymbol{s}) := \theta] \phi}{[\forall i : \kern -0.17em A \ f(\boldsymbol{s}) := \theta] \Box \phi}$} &
{\scriptsize ($\langle := \rangle \diamond$) $\inferencerule{\phi \vee \langle \forall i : \kern -0.17em A \ f(\boldsymbol{s}) := \theta \rangle \phi}{\langle \forall i : \kern -0.17em A \ f(\boldsymbol{s}) := \theta \rangle \Diamond \phi}$}  &
\quad &
\quad \\
{\scriptsize ($[']${\tiny $\Box$}) $\inferencerule{[\forall i : \kern -0.17em A \ f(\boldsymbol{s})' = \theta \with \chi] \phi}{[\forall i : \kern -0.17em A \ f(\boldsymbol{s})' = \theta \with \chi] \Box \phi}$ } &
{\scriptsize ($\langle ' \rangle \diamond$) $\inferencerule{\langle \forall i : \kern -0.17em A \ f(\boldsymbol{s})' = \theta \with \chi \rangle \phi}{\langle \forall i : \kern -0.17em A \ f(\boldsymbol{s})' = \theta \with \chi \rangle \Diamond \phi}$} &
\quad &
\quad \\
{\scriptsize ($[{}^{\ast n}]${\tiny $\Box$}) $\inferencerule{[\alpha; \alpha^{\ast}] \Box \phi}{[\alpha^{\ast}]\Box \phi}$}  &
{\scriptsize ($\langle {}^{\ast n} \rangle \diamond$) $\inferencerule{\langle \alpha; \alpha^{\ast}\rangle \Diamond \phi}{\langle \alpha^{\ast}\rangle \Diamond \phi}$ } &
{\scriptsize ($[{}^{\ast}]${\tiny $\Box$}) $\inferencerule{[\alpha^\ast][\alpha] \Box \phi}{[\alpha^{\ast}]\Box \phi}$} &
{\scriptsize ($\langle{}^{\ast}\rangle\diamond$) $\inferencerule{\langle \alpha^\ast \rangle\langle\alpha\rangle \Diamond \phi}{\langle \alpha^{\ast}\rangle \Diamond \phi}$}
\end{tabular}
\setcounter{footnote}{\value{mpfootnote}}
\end{minipage}
\caption{\scriptsize{Rule schemata of the proof calculus for quantified differential temporal dynamic logic}}
\label{fig:rules}
\end{figure}
\subsubsection{Inherited Non-temporal Rules}
The \qdtl calculus inherits the (non-temporal) \qdl proof rules. For propositional logic, standard rules {\em ax}--{\em cut} are listed in Fig.~\ref{fig:rules}. Rules $[;]$--$\langle ? \rangle$ work similar to those in~\cite{HarelKT00}. Rules $['], \langle ' \rangle$ handle continuous evolutions
for quantified differential equations with first-order definable solutions. Rules $[:=]$--$\langle :\kern -0.3em \ast \rangle$ handle discrete changes for quantified assignments. Axiom {\em ex} expresses that, for sort $A \not = \mathbb{R}$, there always is a new object $n$ that has not been
created yet ($\mathsf{E}(n) = 0$), because domains are infinite. The quantifier rules $\forall$r--i$\exists$ combine quantifier handling of many-sorted logic based on instantiation with theory reasoning by {\em quantifier elimination} (QE) for the theory of reals. The global rules $[]${\em gen}, $\langle \rangle${\em gen} are G\"{o}del generalization rules and {\em ind} is an induction schema for loops with {\em inductive invariant} $\phi$~\cite{HarelKT00}. Similarly, {\em con} generalizes Harel's convergence rule~\cite{HarelKT00} to the hybrid case with decreasing variant $\varphi$~\cite{APlatzer08}. {\em DI} and {\em DC} are rules for quantified differential equations with {\em quantified differential invariants}~\cite{DBLP:conf/csl/Platzer10,DBLP:conf/csl/Platzer10:TR}. Notice that $[\cup], \langle \cup \rangle$ can be generalized to apply to formulas of the form $[\alpha \cup \beta]\pi$ where $\pi$ is an arbitrary trace formula, and not just a state formula as in \qdl. Thus, $\pi$ may begin with $\Box$ and $\Diamond$, which is why the rules are repeated in this generalized form as $[\cup]${\tiny{$\Box$}} and $\langle \cup \rangle \diamond$ in Fig.~\ref{fig:rules}.\vspace{-.3cm}
\vspace{-.1cm}
\subsubsection{Temporal Rules} The new temporal rules in Fig.~\ref{fig:rules} for the \qdtl calculus successively transform temporal specifications of quantified hybrid programs into non-temporal \qdl formulas. The idea underlying this transformation is to decompose quantified hybrid programs and recursively augment intermediate state transitions with appropriate specifications.

Rule $[;]${\tiny $\Box$} decomposes invariants of $\alpha; \beta$ (i.e., $[\alpha; \beta] \Box \phi$ holds) into an invariant of $\alpha$ (i.e., $[\alpha]\Box\phi$) and an invariant of $\beta$ that holds when $\beta$ is started in {\em any} final state of $\alpha$ (i.e., $[\alpha]([\beta]\Box \phi)$). Its difference with the \qdl rule $[;]$ thus is that the \qdtl rule $[;]${\tiny $\Box$} also checks safety invariant $\phi$ at the symbolic states in between the execution of $\alpha$ and $\beta$, and recursively so because of the temporal modality $\Box$. Rule $[:=]${\tiny $\Box$} expresses that invariants of quantified assignments need to hold before and after the discrete change (similarly for $[?]${\tiny $\Box$}, except that tests do not lead to a state change, so $\phi$ holding before the test is all there is to it). Rule $[']${\tiny $\Box$} can directly reduce invariants of continuous evolutions to non-temporal formulas as restrictions of solutions of quantified differential equations are themselves solutions of different duration and thus already included in the continuous evolutions of $\forall i : \kern -0.17em A \ f(\boldsymbol{s})' = \theta$. The (optional) iteration rule $[{}^{\ast n}]${\tiny $\Box$} can partially unwind loops. It relies on rule $[;]${\tiny $\Box$}.
The dual rules $\langle \cup \rangle \diamond$,$\langle ; \rangle \diamond$,$\langle := \rangle \diamond$,$\langle ? \rangle \diamond$,$\langle ' \rangle \diamond$,$\langle {}^{\ast n} \rangle \diamond$ work similarly. Rules for handling $[\alpha]\Diamond \phi$ and $\langle \alpha \rangle \Box \phi$ are discussed in Section~\ref{sec:8}.

The inductive definition rules $[{}^{\ast}]${\tiny $\Box$} and $\langle{}^{\ast}\rangle\diamond$ completely reduce temporal properties of loops to \qdtl properties of standard non-temporal \qdl modalities such that standard induction ({\em ind}) or convergence ({\em con}) rules, as listed in Fig.~\ref{fig:rules}, can be used for the outer non-temporal modality of the loop. Hence, after applying the inductive loop definition rules $[{}^{\ast}]${\tiny $\Box$} and $\langle{}^{\ast}\rangle\diamond$, the standard \qdl loop invariant and variant rules can be used for verifying temporal properties of loops without change, except that the postcondition contains temporal modalities.


\subsection{Soundness and Completeness} 
The following result shows that verification with the \qdtl calculus always produces correct results about the safety of distributed hybrid systems, i.e., the \qdtl calculus is sound.
\begin{theorem}[Soundness of \qdtl]\label{theo:soundness}
The {\em \qdtl} calculus is sound, i.e., every {\em \qdtl} (state) formula that can be proven is valid.
\end{theorem}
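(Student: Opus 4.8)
The plan is to prove soundness \emph{rule by rule}: by induction on the structure of a given derivation it suffices to show that each schema in Fig.~\ref{fig:rules} is \emph{locally sound}, i.e.\ that its conclusion is valid whenever all of its premises are valid (for the propositional, quantifier, and global rules carrying a side context $\Gamma,\Delta$ this is the usual sequent reading of Section~\ref{sec:6}). A calculus built from locally sound rules derives only valid formulas, so Theorem~\ref{theo:soundness} follows once every rule has been checked against the trace semantics of Definition~\ref{def:semanticsofformulas}.

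First I would dispatch the \emph{inherited} \qdl rules ($\text{ax}$--$\text{cut}$, $[;]$--$\langle?\rangle$, $[']$, $\langle'\rangle$, the assignment, skip, object and quantifier rules, $[]$gen, $\langle\rangle$gen, ind, con, DI, DC). Each of these manipulates only state formulas and modalities $[\alpha]\phi,\langle\alpha\rangle\phi$ applied to a \emph{state} formula $\phi$. By clause~1 of the trace-formula valuation in Definition~\ref{def:semanticsofformulas}, such a modality is evaluated on a trace $\nu$ through its last state $\text{last}\,\nu$, so its trace meaning coincides exactly with the final-state reachability semantics of \qdl; this is precisely the content of the conservative-extension Proposition~\ref{prop:cextension}. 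Hence every inherited rule is locally sound for the very reason it is sound in~\cite{DBLP:conf/csl/Platzer10,DBLP:conf/csl/Platzer10:TR}. Crucially, this stays true even when $\phi$ itself contains nested temporal modalities (as happens to the postconditions produced by the loop reductions, to which ind and con are then applied), because $[\alpha]/\langle\alpha\rangle$ still only inspects the value of $\phi$ at reachable final states, where that value is well defined.

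The new work is the temporal rules, which I would verify directly from Definition~\ref{def:tracesemanticsofQHP}. The compositional structure of $\mu$ makes most cases immediate: $[\cup]\Box$ and $\langle\cup\rangle\diamond$ follow from $\mu(\alpha\cup\beta)=\mu(\alpha)\cup\mu(\beta)$; the test and assignment rules follow because $\mu(?\chi)$ and $\mu(\forall i{:}A\;f(\boldsymbol s):=\theta)$ consist of traces built from one or two point flows, so $\Box\phi$ (resp.\ $\Diamond\phi$) reduces to checking $\phi$ at those one or two states, the final one being captured by the surrounding \qdl modality. The rule $[;]\Box$ is the heart of the sequential case: the positions of a trace $\nu\circ\varsigma\in\mu(\alpha;\beta)$ partition into the positions of $\nu$ and those of $\varsigma$, sharing only the boundary state $\text{last}\,\nu=\text{first}\,\varsigma$, so $\Box\phi$ along $\nu\circ\varsigma$ is equivalent to $\Box\phi$ along $\nu$ (furnished by $[\alpha]\Box\phi$) together with $\Box\phi$ along every $\varsigma\in\mu(\beta)$ started in a final state of $\alpha$ (furnished by $[\alpha][\beta]\Box\phi$), while the nonterminating-$\nu$ case is subsumed by the first conjunct. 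Rule $[{}^{\ast n}]\Box$ then reduces to $[;]\Box$ via $\mu(\alpha^\ast)=\mu(?\text{true})\cup\mu(\alpha;\alpha^\ast)$.

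The two steps I expect to be delicate are the continuous rule $[']\Box$ and the loop rule $[{}^{\ast}]\Box$. For $[']\Box$ the key observation (already flagged in Section~\ref{sec:6}) is that the prefix-restriction of any solution $\varphi:[0,r]\to\mathcal S$ to a shorter interval $[0,\zeta]$ is \emph{itself} a member of $\mu(\forall i{:}A\;f(\boldsymbol s)'=\theta\with\chi)$, since it still satisfies the differential equations and respects $\chi$ on $[0,\zeta]$; hence ``$\phi$ at the endpoint of \emph{every} admissible evolution'' (the premise) is equivalent to ``$\phi$ at \emph{every} position of every evolution'' (the conclusion), as each intermediate position is the endpoint of some shorter admissible evolution. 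For $[{}^{\ast}]\Box$ I would induct on the iteration count using $\mu(\alpha^\ast)=\bigcup_n\mu(\alpha^n)$: the base $n=0$ is the point flow at $\sigma$, for which $\Box\phi$ is just $\phi$ at $\sigma$, extracted from the premise $[\alpha^\ast][\alpha]\Box\phi$ (every $\alpha$-trace carries its own first state and $\mu(\alpha)$ is nonempty); the step writes a trace of $\alpha^{n+1}$ as $\nu'\circ\varsigma$ with $\nu'\in\mu(\alpha^n)$, $\varsigma\in\mu(\alpha)$, gets $\Box\phi$ on $\nu'$ from the induction hypothesis and $\Box\phi$ on $\varsigma$ from the premise applied to the $\alpha^\ast$-prefix $\nu'$, and concatenates. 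The dual $\langle\cdot\rangle\diamond$ rules are checked symmetrically, replacing ``for all traces/positions'' by ``for some''. The main obstacle throughout is the bookkeeping of the position structure of composed traces, in particular the shared boundary states of $\nu\circ\varsigma$ and the treatment of aborted/nonterminating traces, for which clauses~2--3 of the trace-formula valuation in Definition~\ref{def:semanticsofformulas} restrict attention to positions $(k,\zeta)$ with $\nu_k(\zeta)\neq\Lambda$.
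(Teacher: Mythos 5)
Your proposal is correct and takes essentially the same approach as the paper's proof: rule-by-rule local soundness, with the inherited \qdl rules dispatched via the conservative extension (Proposition~\ref{prop:cextension}) and each temporal rule verified directly against the trace semantics, including the prefix-restriction argument for $[']\Box$ and duality for the $\Diamond$ rules. The only real deviation is $[{}^{\ast}]\Box$, where you induct on the iteration count using $\mu(\alpha^{\ast})=\bigcup_{n}\mu(\alpha^{n})$, whereas the paper argues by contraposition, splitting a violating trace of $\alpha^{\ast}$ into a maximal number of complete repetitions of $\alpha$ followed by a single trace of $\alpha$ containing the offending position; both rest on the same decomposition and are equally valid.
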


The verification for temporal safety ($[\alpha]\Box \phi$ or $\langle \alpha \rangle \Diamond \phi$), temporal liveness ($[\alpha]\Diamond\phi$ or $\langle \alpha \rangle \Box \phi$), and non-temporal ($[\alpha]\phi$ or $\langle \alpha \rangle \phi$) fragments of distributed hybrid systems has {\em three independent sources} of undecidability. Thus, no verification technique can be effective. Hence, \qdtl cannot be effectively axiomatizable. Both its discrete and its continuous fragments alone are subject to G\"{o}del's incompleteness theorem~\cite{APlatzer08}. The fragment
with only structural and dimension-changing dynamics is not effective either, because it can encode two-counter machines. 

\qdl has been proved to be complete relative to quantified differential equations~\cite{DBLP:conf/csl/Platzer10,DBLP:conf/csl/Platzer10:TR}. Due to the modular construction of the \qdtl calculus, we can lift the relative completeness result from \qdl to \qdtl. We essentially show that \qdtl is complete relative to \qdl, which directly implies that \qdtl calculus is even complete relative to an oracle for the fragment of \qdtl that has only quantified differential equations in modalities. Again, we restrict our attention to homogeneous combinations of path and trace quantifiers like $[\alpha] \Box \phi$ or $\langle \alpha \rangle \Diamond \phi$.
\begin{theorem}[Relative Completeness of \qdtl]\label{theo:completeness}
The calculus in Fig.~\ref{fig:rules} is a complete
axiomatization of {\em \qdtl} relative to quantified differential equations.
\end{theorem}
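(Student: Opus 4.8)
The plan is to derive relative completeness of \qdtl from the known relative completeness of \qdl~\cite{DBLP:conf/csl/Platzer10,DBLP:conf/csl/Platzer10:TR} by showing that the temporal rules of Fig.~\ref{fig:rules} let us eliminate every temporal modality. Concretely, I would establish a \textbf{reduction lemma}: for every \qdtl state formula $\Phi$ there is a non-temporal \qdl formula $\Phi^{\flat}$ with $\Phi \leftrightarrow \Phi^{\flat}$ provable in the calculus. Since we restrict to the homogeneous fragment, only the shapes $[\alpha]\Box\phi$ and $\langle\alpha\rangle\Diamond\phi$ need treatment; everything else is already \qdl. I would prove the lemma by a lexicographic induction whose outer measure is the number of temporal modalities in $\Phi$ — so that the postcondition $\phi$ beneath an outermost $\Box$ or $\Diamond$ may be assumed non-temporal — and whose inner measure is the structural size of the quantified hybrid program $\alpha$ under that outermost temporal modality.

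For the inner induction I would read each temporal rule as the equivalence it encodes and dispatch the program constructors. Rules $[?]_{\Box}$ and $[:=]_{\Box}$ turn $[?\chi]\Box\phi$ and $[\forall i{:}A\,f(\boldsymbol s){:=}\theta]\Box\phi$ into the non-temporal $\phi$, respectively $\phi \wedge [\forall i{:}A\,f(\boldsymbol s){:=}\theta]\phi$; rule $[']_{\Box}$ collapses $[\forall i{:}A\,f(\boldsymbol s)'{=}\theta\with\chi]\Box\phi$ to the \qdl formula $[\forall i{:}A\,f(\boldsymbol s)'{=}\theta\with\chi]\phi$, using that every restriction of a solution is itself a solution; $[\cup]_{\Box}$ and $[;]_{\Box}$ push $\Box$ into the proper subprograms $\alpha,\beta$, which are strictly smaller than $\alpha\cup\beta$ and $\alpha;\beta$; and $[{}^{\ast}]_{\Box}$ rewrites $[\alpha^{\ast}]\Box\phi$ to $[\alpha^{\ast}]([\alpha]\Box\phi)$, whose inner temporal formula again has the strictly smaller program $\alpha$. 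In each recursive case the induction hypothesis replaces the residual inner temporal formula $[\gamma]\Box\phi$ by a non-temporal $([\gamma]\Box\phi)^{\flat}$, and the surrounding non-temporal \qdl modality — pushed over the equivalence by the generalization rules $[]${\em gen}/$\langle\rangle${\em gen} — then leaves a pure \qdl formula. The $\langle\alpha\rangle\Diamond\phi$ cases go through symmetrically with the dual diamond rules.

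With the reduction lemma in hand the theorem follows quickly. Each temporal rule is a valid equivalence, so $\Phi$ is valid iff $\Phi^{\flat}$ is, and by the conservative-extension result (Proposition~\ref{prop:cextension}) the trace-semantics validity of the non-temporal $\Phi^{\flat}$ coincides with its \qdl transition-semantics validity. Relative completeness of \qdl then furnishes a proof of $\Phi^{\flat}$ from quantified differential equations, which composes with the derivation of $\Phi \leftrightarrow \Phi^{\flat}$ into a proof of $\Phi$. Because the outer induction eliminates the temporal modalities inside $\phi$ before the outermost one is treated, nested branching-time formulas are handled by the same argument.

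I expect the main obstacle to be the loop case. Semantically one must verify that $[\alpha^{\ast}]\Box\phi$ and $[\alpha^{\ast}]([\alpha]\Box\phi)$ denote exactly the same traces — in particular that the zero-iteration trace and the first state of every iteration are accounted for — so that $[{}^{\ast}]_{\Box}$ is a genuine equivalence and not merely sound, which is what the reduction lemma needs in both directions. The compensating advantage is modularity: once the body has been reduced, the remaining goal $[\alpha^{\ast}]([\alpha]\Box\phi)^{\flat}$ is an ordinary \qdl safety statement about a loop whose proof still requires an inductive invariant, and supplying that invariant — the genuine expressibility burden — is delegated entirely to the relative completeness of \qdl rather than re-proved with a fresh G\"odel-style encoding for the temporal layer.
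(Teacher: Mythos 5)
Your proposal is correct, and at its core it is the same argument the paper gives: the temporal rules push the temporal modality under strictly smaller programs or eliminate it outright, loops are broken open with $[{}^{\ast}]\Box$ and $\langle{}^{\ast}\rangle\Diamond$, and relative completeness of \qdl, mediated by Proposition~\ref{prop:cextension} to reconcile trace and transition semantics, finishes the job. The packaging differs in one respect worth noting. The paper states no standalone reduction lemma; it generalizes the inductive \qdl completeness proof itself, so after applying $[{}^{\ast}]\Box$ the goal $[\alpha^{\ast}][\alpha]\Box\phi$ is attacked directly with \emph{ind}/\emph{con} while the postcondition is still temporal, and the residual obligation is that formulas like $[\alpha]\Box\phi$ be expressible in the first-order logic of quantified differential equations (FOQD) so that sufficiently strong invariants and variants exist --- an obligation the paper asserts rather than discharges. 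You instead flatten the postcondition $[\alpha]\Box\phi$ to a provably equivalent non-temporal $([\alpha]\Box\phi)^{\flat}$ \emph{before} touching the outer star, so the remaining goal $[\alpha^{\ast}]([\alpha]\Box\phi)^{\flat}$ is a pure \qdl formula and every expressibility question is delegated to \qdl's own completeness proof; the paper's FOQD-expressibility claim for temporal formulas then falls out as a corollary of your reduction rather than needing a separate G\"odel-style encoding. That is a cleaner modularization of the same idea. Two points you should make explicit to complete it: first, your reduction lemma uses each temporal rule in both directions (validity must transfer from conclusion to premise, and provability from premise to conclusion), which is legitimate because the paper's soundness proofs establish every temporal rule as a semantic equivalence, including the zero-iteration and prefix subtleties of $[{}^{\ast}]\Box$ that you flag as the main obstacle; second, lifting the provable implication $([\alpha]\Box\phi)^{\flat}\rightarrow[\alpha]\Box\phi$ underneath $[\alpha^{\ast}]$ requires $[]$\emph{gen} together with \emph{cut}, and the temporal rules must be applicable within sequent context, which the paper's informal sequent reading permits.
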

This result shows that both temporal and non-temporal properties of distributed hybrid systems can be proven to exactly the same extent to which properties of quantified differential equations can be proven. It also gives a formal justification that the \qdtl calculus reduces temporal properties to non-temporal \qdl properties.
\section{Verification of Distributed Air Traffic Control Safety Properties}\label{sec:7} 
Continuing the distributed air traffic control study from Section~\ref{sec:5}, the \qdtl proofs of the temporal safety invariant in (\ref{eq:safety}) and the temporal safety property in (\ref{eq:safetyproperty}) are presented in Fig.~\ref{fig:proofofverificationsafety} and Fig.~\ref{fig:proofofverificationsafety2}, respectively (for the purpose of simplifying the presentation, we ignore typing information $A$ for aircraft in the proof, because it is clear from the context). Note that temporal and non-temporal properties of the maneuver cannot be proven using any hybrid systems verification technique, because the dimension is parametric and unbounded and may even change dynamically during the remainder of the maneuver. The single proof in Fig.~\ref{fig:proofofverificationsafety} or Fig.~\ref{fig:proofofverificationsafety2} corresponds to infinitely many proofs for systems with $n$ aircraft for all $n$.

Our proofs show that the distributed roundabout maneuver always safely avoids collisions for arbitrarily many aircraft (even with dynamic appearance of new aircraft). The above maneuver still requires all aircraft in the horizon of relevance to participate in the collision avoidance maneuver. In fact, we can show that this is unnecessary for aircraft that are far enough away and that may be engaged in other roundabouts. Yet, this is beyond the scope of this paper.
\begin{figure}[!ht]
\vspace{-3mm}
\hspace*{-0.1in}
\begin{minipage}{1.1\linewidth}
{\scriptsize
\begin{prooftree}
\AxiomC{{\em true}}
\LeftLabel{$\mathbb{R}$}
\UnaryInfC{$\forall i, j \kern 0.15em \mathcal{T}(i, j) \rightarrow \forall i, j (2(x_1(i) - x_1(j))(-\omega(x_2(i) - x_2(j))) + 2(x_2(i) - x_2(j))\omega(x_1(i) - x_1(j)) \geq 0$)}
\LeftLabel{$\mathbb{R}$}
\UnaryInfC{$\forall i, j \kern 0.15em \mathcal{T}(i,j) \rightarrow \forall i, j ( 0 = 0 \wedge 2(x_1(i) - x_1(j))(d_1(i) - d_1(j)) + 2(x_2(i) - x_2(j))(d_2(i) - d_2(j)) \geq 0)$}
\LeftLabel{$[:=]$}
\UnaryInfC{$\forall i, j \kern 0.15em \mathcal{T}(i,j) \rightarrow [\forall i \kern 0.15em \mathcal{L}(i)] \forall i, j (i' = j' \wedge 2(x_1(i) - x_1(j))(x_1(i)' - x_1(j)') + 2(x_2(i) - x_2(j))(x_2(i)' - x_2(j)') \geq 0)$}
\end{prooftree}}
\end{minipage}
\end{figure}
\begin{figure}
\vspace{-14mm}
\hspace*{-0.6in}
\begin{minipage}{0.4\linewidth}
{\scriptsize
\begin{prooftree}
\AxiomC{{\em true}}
\LeftLabel{$\mathbb{R}$}
\UnaryInfC{$\forall i, j ( - \omega d_2(i) - (- \omega d_2(j)) = - \omega(d_2(i) - d_2(j))  \wedge \omega d_1(i) - \omega d_1(j) = \omega(d_1(i) - d_1(j)))$}
\LeftLabel{$[:=]$}
\UnaryInfC{$[\forall i \kern 0.15em \mathcal{L}(i)] \forall i, j ( d_1(i)' - d_1(j)' = - \omega (x_2(i)' - x_2(j)') \wedge d_2(i)' - d_2(j)' = \omega(x_1(i)' - x_1(j)' ))$}
\end{prooftree}}
\end{minipage}
\end{figure}
\begin{figure}
\vspace{-8mm}
\setlength{\unitlength}{0.75mm}
\begin{picture}(0,0)
\thicklines
\put(28,12){\vector(-1,4){2.6}}
\put(137,12){\vector(1,2){12.5}}
\end{picture}
\hspace*{-0.7in}
\begin{minipage}{0.4\linewidth}
{\scriptsize
\begin{prooftree}
\AxiomC{$\cdots$}
\LeftLabel{$[:=]$}
\UnaryInfC{$[\forall i \kern 0.15em \mathcal{L}(i)](\forall i, j \kern 0.15em \mathcal{T}(i, j))'$}
\LeftLabel{$DI$}
\UnaryInfC{$\forall i, j \kern 0.15em \mathcal{P}(i, j) \wedge \forall i, j \kern 0.15em \mathcal{T}(i, j) \rightarrow [\forall i \kern 0.15em {\mathcal{F}}_{\omega}(i)]\forall i,j \kern 0.15em \mathcal{T}(i,j)$}
				
				\AxiomC{$\cdots$}
                \LeftLabel{$[:=]$}
				\UnaryInfC{$\forall i, j \kern 0.15em \mathcal{T}(i, j) \rightarrow [\forall i \kern 0.15em \mathcal{L}(i)](\forall i, j \kern 0.15em \mathcal{P}(i, j))'$}
				\LeftLabel{$DI$}
				\UnaryInfC{$\forall i, j \kern 0.15em \mathcal{P}(i, j) \wedge \forall i, j \kern 0.15em \mathcal{T}(i, j) \rightarrow [\forall i \kern 0.15em {\mathcal{F}}_{\omega}(i) \with \forall i, j \kern 0.15em \mathcal{T}(i, j)]\forall i,j \kern 0.15em \mathcal{P}(i,j)$}
				
\LeftLabel{$DC$}
\BinaryInfC{$\forall i, j \kern 0.15em \mathcal{P}(i, j) \wedge \forall i, j \kern 0.15em \mathcal{T}(i, j) \rightarrow [\forall i \kern 0.15em {\mathcal{F}}_{\omega}(i)]\forall i,j \kern 0.15em \mathcal{P}(i,j)$}
\LeftLabel{$[']${\tiny $\Box$}}
\UnaryInfC{$\forall i, j \kern 0.15em \mathcal{P}(i, j) \wedge \forall i, j \kern 0.15em \mathcal{T}(i, j) \rightarrow [\forall i \kern 0.15em {\mathcal{F}}_{\omega}(i)] \Box \kern 0.08em \forall i,j \kern 0.15em \mathcal{P}(i,j)$}
\end{prooftree}}
\end{minipage}
\begin{center}
{\scriptsize Abbreviation: $\mathcal{L}(i) \equiv x_1(i)' : = d_1(i), x_2(i)' : = d_2(i), d_1(i)' := -\omega d_2(i), d_2(i)' := \omega d_1(i)$}
\end{center}
\vspace{-2mm}
\caption{\scriptsize{Proof for temporal collision freedom of roundabout collision avoidance maneuver circle}}
\label{fig:proofofverificationsafety}
\end{figure}

\begin{figure}[t!]
\vspace{-3mm}
\hspace*{-0.1in}
\begin{minipage}{1.1\linewidth}
{\scriptsize
\begin{prooftree}
\AxiomC{{\em true}}
\LeftLabel{$\mathbb{R}$}
\UnaryInfC{$\chi \wedge \forall i, j \kern 0.15em \mathcal{T}(i, j) \rightarrow \forall i, j (2(x_1(i) - x_1(j))(-\omega(x_2(i) - x_2(j))) + 2(x_2(i) - x_2(j))\omega(x_1(i) - x_1(j)) \geq 0$)}
\LeftLabel{$\mathbb{R}$}
\UnaryInfC{$\chi \wedge \forall i, j \kern 0.15em \mathcal{T}(i,j) \rightarrow \forall i, j ( 0 = 0 \wedge 2(x_1(i) - x_1(j))(d_1(i) - d_1(j)) + 2(x_2(i) - x_2(j))(d_2(i) - d_2(j)) \geq 0)$}
\LeftLabel{$[:=]$}
\UnaryInfC{$\chi \wedge \forall i, j \kern 0.15em \mathcal{T}(i,j) \rightarrow [\forall i \kern 0.15em \mathcal{K}(i)] \forall i, j (i' = j' \wedge 2(x_1(i) - x_1(j))(x_1(i)' - x_1(j)') + 2(x_2(i) - x_2(j))(x_2(i)' - x_2(j)') \geq 0)$}
\end{prooftree}}
\end{minipage}
\end{figure}
\begin{figure}
\hspace*{-1.0in}
\begin{minipage}{0.4\linewidth}
{\scriptsize
\begin{prooftree}
\AxiomC{{\em true}}
\LeftLabel{$\mathbb{R}$}
\UnaryInfC{$\chi \rightarrow \forall i, j ( - \omega d_2(i) - (- \omega d_2(j)) = - \omega(d_2(i) - d_2(j))  \wedge \omega d_1(i) - \omega d_1(j) = \omega(d_1(i) - d_1(j)))$}
\LeftLabel{$[:=]$}
\UnaryInfC{$\chi \rightarrow [\forall i \kern 0.15em \mathcal{K}(i)] \forall i, j ( d_1(i)' - d_1(j)' = - \omega (x_2(i)' - x_2(j)') \wedge d_2(i)' - d_2(j)' = \omega(x_1(i)' - x_1(j)' ))$}
\end{prooftree}}
\end{minipage}
\end{figure}
\begin{figure}[!ht]
\setlength{\unitlength}{0.75mm}
\begin{picture}(0,0)
\thicklines
\put(24.7,9.8){\vector(-1,4){1.6}}
\put(139.6,9.8){\vector(1,2){12.7}}
\end{picture}
\hspace*{-0.9in}
\begin{minipage}{0.4\linewidth}
{\scriptsize
\begin{prooftree}
\AxiomC{$\cdots$}
\LeftLabel{$[:=]$}
\UnaryInfC{$\chi
\rightarrow [\forall i \kern 0.15em
\mathcal{K}(i)] (\forall i, j \kern 0.15em \mathcal{T}(i,j))'$}
\LeftLabel{$DI$}
\UnaryInfC{$\forall i, j \kern 0.15em \mathcal{P}(i,j) \wedge \forall i, j \kern 0.15em
\mathcal{T}(i,j)
\rightarrow [\forall i \kern 0.15em
\mathcal{M}(i) \with \chi] \forall i, j \kern 0.15em \mathcal{T}(i,j)$}

                                                                                         \AxiomC{$\cdots$}
                                                                                         \LeftLabel{$[:=]$}
                                                                                         \UnaryInfC{$\chi \wedge \forall i, j \kern 0.15em \mathcal{T}(i,j) \rightarrow [\forall i \kern 0.15em \mathcal{K}(i)] (\forall i, j \kern 0.15em \mathcal{P}(i,j))'$}
                                                                                         \LeftLabel{$DI$}
                                                                                         \UnaryInfC{$\forall i, j \kern 0.15em \mathcal{P}(i,j) \wedge \forall i, j \kern 0.15em \mathcal{T}(i,j) \rightarrow [\forall i \kern 0.15em \mathcal{M}(i) \with \chi \wedge \forall i, j \kern 0.15em \mathcal{T}(i,j)] \forall i, j \kern 0.15em \mathcal{P}(i,j)$}
\LeftLabel{$DC$}
\BinaryInfC{$\forall i, j \kern 0.15em \mathcal{P}(i,j) \wedge \forall i, j \kern 0.15em
\mathcal{T}(i,j)
\rightarrow [\forall i \kern 0.15em
\mathcal{M}(i) \with \chi] \forall i, j \kern 0.15em \mathcal{P}(i,j)$}
\end{prooftree} }
\end{minipage}
\end{figure}
\begin{figure}[!ht]
\setlength{\unitlength}{0.75mm}
\begin{picture}(0,0)
\thicklines
\put(14.0,9.4){\vector(1,2){4.9}}
\put(147.4,9.4){\vector(-1,2){4.9}}
\end{picture}
\hspace*{-1.7in}
\begin{minipage}{0.4\linewidth}
{\scriptsize
\begin{prooftree}
\AxiomC{$\cdots$}
\UnaryInfC{$\forall i, j \kern 0.15em \mathcal{P}(i,j) \wedge \forall i, j \kern 0.15em
\mathcal{T}(i,j)
\rightarrow [\forall i \kern 0.15em
\mathcal{M}(i) \with \chi] \forall i, j \kern 0.15em \mathcal{P}(i,j)$}
\LeftLabel{$[']${\tiny $\Box$},$[:=]$}
\UnaryInfC{$\forall i, j \kern 0.15em \mathcal{P}(i,j) \wedge \forall i, j \kern 0.15em
\mathcal{T}(i,j)
\rightarrow [\forall i \kern 0.15em t := 0][\forall i \kern 0.15em
\mathcal{M}(i) \with \chi]
\Box \kern 0.08em \forall i, j \kern 0.15em \mathcal{P}(i,j)$}
                                                                                        \AxiomC{$\cdots$}
                                                                                        \UnaryInfC{$\forall i, j \kern 0.15em \mathcal{P}(i,j) \wedge \forall i, j \kern 0.15em \mathcal{T}(i,j) \rightarrow [\forall i \kern 0.15em \mathcal{M}(i) \with \chi] \forall i, j \kern 0.15em \mathcal{P}(i,j)$}
                                                                                        \LeftLabel{$[?]${\tiny $\Box$},$[:=]$}
                                                                                        \UnaryInfC{$\forall i, j \kern 0.15em \mathcal{P}(i,j) \wedge \forall i, j \kern 0.15em \mathcal{T}(i,j) \rightarrow [\forall i \kern 0.15em t := 0][\forall i \kern 0.15em \mathcal{M}(i) \with \chi][? \eta] \Box \kern 0.08em \forall i, j \kern 0.15em \mathcal{P}(i,j)$}
\LeftLabel{$[;]${\tiny $\Box$}}
\BinaryInfC{$\forall i, j \kern 0.15em \mathcal{P}(i,j) \wedge \forall i, j \kern 0.15em \mathcal{T}(i,j) \rightarrow [\forall i \kern 0.15em t := 0][\forall i \kern 0.15em \mathcal{M}(i) \with \chi; ? \eta] \Box \kern 0.08em \forall i, j \kern 0.15em \mathcal{P}(i,j)$}
\end{prooftree}}
\end{minipage}
\end{figure}
\begin{figure}[!ht]
\hspace*{-1.7in}
\begin{minipage}{0.4\linewidth}
{\scriptsize
\begin{prooftree}
\AxiomC{{\em true}}
\LeftLabel{{\em ax}}
\UnaryInfC{$\forall i, j \kern 0.15em \mathcal{P}(i,j), \forall i, j \kern 0.15em \mathcal{T}(i,j) \rightarrow \forall i, j \kern 0.15em \mathcal{P}(i,j)$}
\LeftLabel{$\wedge$l}
\UnaryInfC{$\forall i, j \kern 0.15em \mathcal{P}(i,j) \wedge \forall i, j \kern 0.15em \mathcal{T}(i,j) \rightarrow \forall i, j \kern 0.15em \mathcal{P}(i,j)$}
                                                              \AxiomC{{\em true}}
                                                              \LeftLabel{{\em ax}}
                                                              \UnaryInfC{$\forall i, j \kern 0.15em \mathcal{P}(i,j), \forall i, j \kern 0.15em \mathcal{T}(i,j) \rightarrow \forall i, j \kern 0.15em \mathcal{P}(i,j)$}
                                                              \LeftLabel{$[:=]$,$\wedge$l}
                                                              \UnaryInfC{$\forall i, j \kern 0.15em \mathcal{P}(i,j) \wedge \forall i, j \kern 0.15em \mathcal{T}(i,j) \rightarrow [\forall i \kern 0.15em t := 0] \forall i, j \kern 0.15em \mathcal{P}(i,j)$}
\LeftLabel{$[:=]${\tiny $\Box$}}
\BinaryInfC{$\forall i, j \kern 0.15em \mathcal{P}(i,j) \wedge \forall i, j \kern 0.15em \mathcal{T}(i,j) \rightarrow [\forall i \kern 0.15em t := 0]
\Box \kern 0.08em \forall i, j \kern 0.15em \mathcal{P}(i,j)$}
\end{prooftree}}
\end{minipage}
\end{figure}
\begin{figure}[!ht]
\setlength{\unitlength}{0.75mm}
\begin{picture}(0,0)
\thicklines
\put(23.5,7){\vector(1,2){2.9}}
\put(131.4,7){\vector(1,2){16.2}}
\end{picture}
\hspace*{-0.9in}
\begin{minipage}{0.4\linewidth}
{\scriptsize
\begin{prooftree}
\AxiomC{$\cdots$}
\UnaryInfC{$\forall i, j \kern 0.15em \mathcal{P}(i,j) \wedge \forall i, j
\kern 0.15em \mathcal{T}(i,j) \rightarrow
[\forall i \kern 0.15em t := 0] \Box \kern 0.08em
\forall i, j \kern 0.15em \mathcal{P}(i,j)$}
                                                  \AxiomC{$\cdots$}
                                                  \UnaryInfC{$\forall i, j \kern 0.15em \mathcal{P}(i,j) \wedge \forall i, j \kern 0.15em \mathcal{T}(i,j) \rightarrow [\forall i \kern 0.15em t := 0][\forall i \kern 0.15em \mathcal{M}(i) \with \chi; ?\eta] \Box \kern 0.08em \forall i, j \kern 0.15em \mathcal{P}(i,j)$}
\LeftLabel{$[;]${\tiny $\Box$}}
\BinaryInfC{$\forall i, j \kern 0.15em \mathcal{P}(i,j) \wedge \forall i, j \kern 0.15em \mathcal{T}(i,j) \rightarrow [\forall i \kern 0.15em t := 0; \forall i \kern 0.15em \mathcal{M}(i) \with \chi; ?\eta] \Box \kern 0.08em \forall i, j \kern 0.15em \mathcal{P}(i,j)$}
\end{prooftree}}
\end{minipage}
\end{figure}
\begin{figure}[!ht]
\[{\scriptsize
\begin{array}{l c}
\text{Abbreviations:} & \mathcal{M}(i) \equiv {\mathcal{F}}_{\omega(i)}(i), t' = 1 \\
& \chi \equiv \forall i \kern 0.15em t \leq T \\
& \eta \equiv \forall i \kern 0.15em t = T \\
& \mathcal{K}(i) \equiv x_1(i)' : = d_1(i), x_2(i)' : = d_2(i), d_1(i)' := -\omega d_2(i), d_2(i)' := \omega d_1(i), t' := 1 \\
\end{array}}\]
\caption{\scriptsize{Proof for temporal collision freedom of roundabout collision avoidance maneuver circle in bounded time}}
\label{fig:proofofverificationsafety2}
\end{figure}

\section{Liveness by Quantifier Alternation}\label{sec:8} 
Liveness specifications of the form $[\alpha]\Diamond \phi$ or $\langle \alpha \rangle \Box \phi$ are sophisticated ($\Sigma^1_1$-hard
because they can express infinite occurrence in Turing machines). Beckert and
Schlager~\cite{BeckertS01}, for instance, note that they failed to find sound rules for a discrete case that corresponds
to $[\alpha;\beta]\Diamond \phi$.

For {\em finitary liveness semantics}, we can still find proof rules. In this section,
we modify the meaning of $[\alpha]\Diamond\phi$ to refer to all {\em terminating} traces of $\alpha$. Then,
the straightforward generalization $[;]\diamond$ in Fig.~\ref{fig:rules2} is sound, even in the hybrid
case. But $[;]\diamond$ still leads to an incomplete axiomatization
as it does not cover the case where, in some traces, $\phi$  becomes true at some
point during $\alpha$, and in other traces, $\phi$ only becomes true during
$\beta$. To overcome this limitation, we use a program transformation approach. We instrument the
quantified hybrid program to monitor the occurrence of $\phi$ during all changes: In $[\alpha]\diamond$, $\check{\alpha}$ results from replacing all occurrences of $\forall i : \kern -0.17em A \ f(\boldsymbol{s}) := \theta$ with $\forall i : \kern -0.17em A \ f(\boldsymbol{s}) := \theta; ? (\phi \rightarrow t = 1)$ and $\forall i : \kern -0.17em A \ f(\boldsymbol{s})' = \theta \with \chi$
with $\forall i : \kern -0.17em A \ f(\boldsymbol{s})' = \theta \with \chi \wedge (\phi \rightarrow t =1)$. The latter is a continuous evolution restricted to the region that satisfies $\chi$ and $\phi \rightarrow t = 1$. The effect
is that $t$ detects whether $\phi$ has occurred during any change in $\alpha$. In particular, $t$
is guaranteed to be $1$ after all runs if $\phi$ occurs at least once along all traces of $\alpha$.
\begin{figure}
\vspace{-5mm}
{\scriptsize \[([;]\diamond) \ \inferencerule{\rightarrow [\alpha]\Diamond \phi, [\alpha][\beta]\Diamond\phi}{\rightarrow [\alpha; \beta]\Diamond \phi} \quad \quad
([\alpha]\diamond) \ \inferencerule{\phi \vee \forall t : \kern -0.17em \mathbb{R}\ [\check{\alpha}]t = 1}{[\alpha]\Diamond\phi} \]}
\vspace{-3mm}
\caption{{\scriptsize Transformation rules for alternating temporal path and trace quantifiers}}
\label{fig:rules2}
\end{figure}
\section{Conclusions and Future Work}\label{sec:9} 
For reasoning about distributed hybrid systems, we have introduced a temporal dynamic
logic, \qdtl, with modal path quantifiers over traces and temporal quantifiers
along the traces. It combines the capabilities of quantified differential dynamic logic to reason
about possible distributed hybrid system behavior with the power of temporal logic in reasoning about the behavior along traces. Furthermore, we have presented a proof calculus for verifying temporal safety specifications of quantified hybrid programs in \qdtl.

Our sequent calculus for \qdtl is a completely modular combination of temporal and non-temporal
reasoning. Temporal formulas are handled using rules that augment
intermediate state transitions with corresponding sub-specifications. Purely non-temporal
rules handle the effects of discrete transitions, continuous evolutions, and structural/dimension changes. The modular nature of the \qdtl calculus further enables us to lift the relative completeness result from \qdl to \qdtl. This theoretical result shows that the \qdtl calculus is a sound and complete axiomatization of the temporal behavior of distributed hybrid systems relative to differential equations. As an example, we demonstrate that our logic is suitable for reasoning about temporal safety properties in a distributed air traffic control system.

We are currently extending our verification tool for distributed
hybrid systems, which is an automated theorem prover called KeYmaeraD~\cite{DBLP:conf/icfem/RenshawLP11}, to cover the full \qdtl calculus. Future work includes extending
\qdtl with $\text{CTL}^{\ast}$-like~\cite{EmersonH86} formulas of the form $[\alpha](\psi \wedge \Box \phi)$ to avoid splitting of
the proof into two very similar sub-proofs for temporal parts $[\alpha]\Box\phi$ and non-temporal
parts $[\alpha]\psi$  arising in $[;]${\tiny $\Box$}. Our combination of temporal logic with dynamic
logic is more suitable for this purpose than the approach in~\cite{BeckertS01}, since \qdtl
has uniform modalities and uniform semantics for temporal and non-temporal
specifications. This extension will also simplify the treatment of alternating liveness
quantifiers conceptually.
\bibliographystyle{plain}
\bibliography{cmulib}
\newpage
\appendix
\section{Proof of Conservative Temporal Extension}
\subsection{Reachability Semantics of \qdl}
In this subsection, we review the reachability semantics of \qdl given in~\cite{DBLP:conf/csl/Platzer10,DBLP:conf/csl/Platzer10:TR} before proving Proposition~\ref{prop:cextension} in the next subsection.
\begin{definition}[Reachability Semantics of QHP] \label{def:semanticsofqhp}
The reachability semantics, $\rho(\alpha)$, of QHP $\alpha$, is {\em a transition relation} on states. It
specifies which state $\tau \in \mathcal{S}$ is reachable from a state $\sigma \in \mathcal{S}$ by running QHP $\alpha$ and is defined as:
\begin{enumerate}
\item $(\sigma, \tau) \in \rho(\forall i : \kern -0.17em A \ f(\boldsymbol{s}) := \theta)$ iff $\tau$ is identical to $\sigma$ except that at each position $\boldsymbol{o}$ of $f$: if $\sigma_i^{e} \llbracket \boldsymbol{s} \rrbracket = \boldsymbol{o}$ for some object $e \in \sigma(A)$, then $\tau(f)(\sigma_i^e\llbracket \boldsymbol{s} \rrbracket) = \sigma_i^e \llbracket \theta \rrbracket$. If there are multiple objects $e$ giving the same position $\sigma_i^{e} \llbracket \boldsymbol{s} \rrbracket = \boldsymbol{o}$,  then all of the resulting states $\tau$ are reachable.

\item $(\sigma, \tau) \in \rho(\forall i : \kern -0.17em A \ f(\boldsymbol{s})' = \theta \with \chi)$ iff there is a function $\varphi: [0, r] \rightarrow \mathcal{S}$ for some $r \geq 0$ with $\varphi(0) = \sigma$ and $\varphi(r) =\tau$ satisfying the following conditions. At each time $t \in [0,r]$, state $\varphi(t)$ is identical to $\sigma$, except that at each position $\boldsymbol{o} \ \text{of} \ f: \ \text{if} \ \sigma_i^{e} \llbracket \boldsymbol{s} \rrbracket = \boldsymbol{o}$ for some object $e \in \sigma(A)$, then, at each time $\zeta \in [0,r]$:
     \begin{itemize}
     \item The differential equations hold and derivatives exist (trivial for $r = 0$):
     \[\inferencerule{\mathsf{d} ({\varphi(t)}^{e}_{i}\llbracket f(\boldsymbol{s}) \rrbracket)}{\mathsf{d} t}(\zeta) = ({\varphi(\zeta)}^{e}_{i}\llbracket \theta \rrbracket) \]
     \item The evolution domains is respected: ${\varphi(\zeta)}^{e}_{i} \llbracket \chi \rrbracket = \text{true}.$
     \end{itemize}
     If there are multiple objects $e$ giving the same position $\sigma_i^{e} \llbracket \boldsymbol{s} \rrbracket = \boldsymbol{o}$,  then all of the resulting states $\tau$ are reachable.

 \item $\rho(?\chi) = \{ (\sigma, \sigma) : \sigma \llbracket \chi \rrbracket = \text{true} \}$
 \item $\rho(\alpha \cup \beta) = \rho(\alpha) \cup \rho(\beta)$
 \item $\rho(\alpha;\beta) = \{ (\sigma, \tau) : (\sigma, z) \in \rho(\alpha)$ and $(z, \tau) \in \rho(\beta)$ for a state $z. \}$

 \item $(\sigma, \tau) \in \rho(\alpha^{\ast})$ iff there is an $n \in  \mathbb{N}$ with $n \geq 0$ and there are states $\sigma = \sigma_0, \ldots, \sigma_n = \tau$ such that $(\sigma_i, \sigma_{i+1}) \in  \rho(\alpha)$ for all $0 \leq i < n$.
\end{enumerate}
\end{definition}

\begin{definition}[Valuation of \qdl Formulas]\label{def:semanticsofqdl}
 The valuation of \qdl formula $\phi$ with respect to state $\sigma$ is defined as follows:
\begin{enumerate}
\item $\sigma \llbracket \theta_1 = \theta_2 \rrbracket = $ true  iff $\sigma \llbracket \theta_1 \rrbracket = \sigma \llbracket \theta_2 \rrbracket$; accordingly for $\geq$.
\item $\sigma \llbracket \phi \wedge \psi \rrbracket =$ true iff $\sigma \llbracket \phi \rrbracket =$ true and $\sigma \llbracket \psi \rrbracket =$ true; accordingly for $\neg$.
\item $\sigma \llbracket \forall i : \kern -0.17em A \ \phi \rrbracket =$ true iff $\sigma_i^e \llbracket \phi \rrbracket =$ true for all objects $e \in \sigma(A)$.
\item $\sigma \llbracket \exists i : \kern -0.17em A \ \phi \rrbracket =$ true iff $\sigma_i^e \llbracket \phi \rrbracket =$ true for some object $e \in \sigma(A)$.
\item $\sigma \llbracket [\alpha] \phi \rrbracket =$ true iff $\tau \llbracket \phi \rrbracket =$ true for all states $\tau$ with $(\sigma, \tau) \in \rho(\alpha)$.

\item $\sigma \llbracket \langle \alpha \rangle \phi \rrbracket =$ true iff $\tau \llbracket \phi \rrbracket =$ true for some $\tau$ with $(\sigma, \tau) \in \rho(\alpha)$.
\end{enumerate}
\end{definition}

\subsection{Extension Proof}
The proof of Proposition~\ref{prop:cextension} uses the following lemma about the relationship of reachability and trace semantics of \qdtl programs, which agree on initial and final states.
\begin{lemma}
For QHP $\alpha$, we have
\[\rho(\alpha) = \{(\emph{\text{first}} \kern 0.25em \nu, \emph{\text{last}} \kern 0.25em \nu): \nu \in \mu(\alpha) \ \text{terminates}. \} \]
\end{lemma}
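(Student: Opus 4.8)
The plan is to prove the identity by structural induction on the quantified hybrid program $\alpha$, matching the inductive clauses of the reachability semantics (Definition~\ref{def:semanticsofqhp}) against those of the trace semantics (Definition~\ref{def:tracesemanticsofQHP}) case by case. In each base case the two definitions are written with literally the same side conditions, so the only work is to observe that the traces produced are point flows or single continuous periods whose first and last states coincide with the transition endpoints used in $\rho$.

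For the three atomic programs this is immediate. For a quantified assignment $\forall i : A\ f(\boldsymbol{s}) := \theta$, every trace is a pair of point flows $(\hat\sigma,\hat\tau)$, which terminates with $\text{first}\,\nu = \sigma$ and $\text{last}\,\nu = \tau$, and the defining relation between $\tau$ and $\sigma$ (including the nondeterminism when several objects $e$ hit the same position) is exactly the one used in $\rho$. For a quantified differential equation, every trace is a single period $(\varphi)$ with $\varphi:[0,r]\to\mathcal{S}$, terminating with first state $\varphi(0)$ and last state $\varphi(r)$; the flow condition and the evolution-domain constraint are identical to those in $\rho$, so the set of endpoint pairs $(\varphi(0),\varphi(r))$ equals $\rho$. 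For a test $?\chi$, the terminating traces are precisely the singletons $(\hat\sigma)$ with $\sigma\models\chi$ (the aborting traces $(\hat\sigma,\hat\Lambda)$ do not terminate, since they end in $\Lambda$), and these have $\text{first}\,\nu=\text{last}\,\nu=\sigma$, matching the identity pairs of $\rho(?\chi)$.

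The inductive step for $\alpha\cup\beta$ is trivial since both semantics distribute over union, so the claim follows by applying the induction hypothesis to each disjunct. The crux is sequential composition $\alpha;\beta$. Here I would argue that a composite trace $\nu\circ\varsigma$ terminates exactly when $\nu$ terminates with $\text{last}\,\nu = \text{first}\,\varsigma$ and $\varsigma$ terminates: a non-terminating $\nu$ forces $\nu\circ\varsigma=\nu$ to be non-terminating and hence contributes nothing, while the ``not defined'' branch is excluded from $\mu(\alpha;\beta)$ by construction. For such terminating composites, $\text{first}(\nu\circ\varsigma)=\text{first}\,\nu$ and $\text{last}(\nu\circ\varsigma)=\text{last}\,\varsigma$. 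Applying the induction hypothesis to $\alpha$ and to $\beta$, the gluing condition $\text{last}\,\nu=\text{first}\,\varsigma$ becomes precisely the existence of an intermediate state $z$ with $(\text{first}\,\nu,z)\in\rho(\alpha)$ and $(z,\text{last}\,\varsigma)\in\rho(\beta)$, which is the definition of $\rho(\alpha;\beta)$.

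Finally, for $\alpha^{\ast}$ I would run an auxiliary induction on $n$ establishing $\rho(\alpha^n)=\{(\text{first}\,\nu,\text{last}\,\nu):\nu\in\mu(\alpha^n)\text{ terminates}\}$, using $\alpha^{0}=(?\text{true})$ for the base case and the sequential-composition step just proved for $\alpha^{n+1}=\alpha^{n};\alpha$, and then take the union over $n\in\mathbb{N}$; this matches $\mu(\alpha^{\ast})=\bigcup_n\mu(\alpha^n)$ and the finite-chain characterization of $\rho(\alpha^{\ast})$. The main obstacle I anticipate is the bookkeeping in the sequential case: one must handle the three-way definition of $\nu\circ\varsigma$ together with the $\Lambda$ convention with care, verifying both that every terminating trace of $\alpha;\beta$ decomposes into a gluable pair of terminating subtraces, and, conversely, that gluing two terminating subtraces at a shared state yields a terminating trace whose endpoints are the outer endpoints. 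This is exactly where compositionality of the duration-bounded trace definition is used.
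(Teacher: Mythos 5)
Your proposal is correct and takes essentially the same route as the paper's own proof: structural induction on $\alpha$, with the atomic cases settled by direct comparison of the two semantics, the test and sequential-composition cases argued in both directions via the same termination/gluing analysis (non-terminating or $\Lambda$-ending prefixes contributing nothing), and $\alpha^{\ast}$ reduced to the composition case by induction over the number of iterations. The paper is merely terser, dismissing the assignment, differential-equation, and choice cases as immediate and the star case as an inductive consequence of composition, so there is no gap to report.
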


\begin{proof}
The proof follows an induction on the structure of $\alpha$.
\begin{itemize}
\item The cases $\forall i : \kern -0.17em A \ f(\boldsymbol{s}) := \theta$, $\forall i : \kern -0.17em A \ f(\boldsymbol{s})' = \theta \with \chi$, and $\alpha \cup \beta$ are simple by comparing the Definition 2 and Definition 4.
\item For $?\chi$, the reasoning splits into two directions. For the direction ``$\supseteq$", assume $\nu \in \mu(?\chi)$.
We distinguish between two cases. If first\kern 0.25em $\nu \llbracket \chi \rrbracket$ = {\em true}, then $\nu = (\hat{\sigma})$ has length one,
last\kern 0.25em$\nu$ = first\kern 0.25em$\nu$, and (first\kern 0.25em$\nu$, last\kern 0.25em$\nu$) $\in \rho(\alpha)$.
If, however, first\kern 0.25em$\nu \llbracket \chi \rrbracket$ = {\em false}, then $\nu = (\hat{\sigma}, \hat{\Lambda})$ does not terminate, hence, there is nothing to show. Conversely, for ``$\subseteq$", assume $(\sigma, \sigma) \in \rho(?\chi)$, then $\sigma \llbracket \chi \rrbracket$ = {\em true} and $(\hat{\sigma}) \in \mu(?\chi)$ satisfies the conditions on $\nu$.
\item For $\alpha ; \beta$  and the direction ``$\supseteq$", assume that $\nu \circ \varsigma \in \mu(\alpha; \beta)$ terminates with $\nu \in \mu(\alpha), \varsigma \in \mu(\beta)$, and last\kern 0.25em$\nu$ = first\kern 0.25em$\varsigma$. Then, by induction hypothesis, we can assume that (first\kern 0.25em$\nu$, last\kern 0.25em$\nu) \in \rho(\alpha)$, and (first\kern 0.25em$\varsigma$, last\kern 0.25em$\varsigma) \in \rho(\beta)$. By the semantics
of sequential composition, we conclude (first\kern 0.25em$\nu \circ \varsigma$, last\kern 0.25em$\nu \circ \varsigma) \in \rho(\alpha; \beta)$. Conversely, for ``$\subseteq$", assume that $(\sigma, \tau) \in \rho(\alpha; \beta)$, i.e., let $(\sigma, z) \in \rho(\alpha)$ and $(z, \tau) \in \rho(\beta)$. By induction hypothesis, there is a terminating trace $\nu \in \mu(\alpha)$ with first\kern 0.25em$\nu = \sigma$ and last\kern 0.25em$\nu = z$. Further, by induction hypothesis, there is a terminating trace $\varsigma \in \mu(\beta)$ with first\kern 0.25em$\varsigma = z$ and last\kern 0.25em$\varsigma = \tau$. Hence, $\nu \circ \varsigma \in \mu(\alpha; \beta)$ terminates, first\kern 0.25em$\nu \circ \varsigma = \sigma$ and last\kern 0.25em$\nu \circ \varsigma = \tau$.
\item The case $\alpha^{\ast}$ is an inductive consequence of the sequential composition case.
\end{itemize}
\end{proof}

\begin{proof}[of Proposition~\ref{prop:cextension}]
The formulas of \qdl are a subset of the \qdtl formulas. In the course of this proof, we use the notation $\sigma_{\qdl}\llbracket \cdot \rrbracket$ to indicate that the \qdl valuation from Definition~\ref{def:semanticsofqdl} is used. For \qdl formulas $\psi$, we show that the valuations with respect to Definition 3 and
with respect to Definition 5 are the same for all states $\sigma$:
\[\sigma \llbracket \psi \rrbracket = \sigma_{\qdl} \llbracket \psi \rrbracket \  \text{for all } \sigma. \]
We prove this by induction on the structure of $\psi$. The cases $\theta_1 = \theta_2, \theta_1 \geq \theta_2, \neg \phi, \phi \wedge \psi, \forall i : \kern -0.17em A \ \phi, \exists i : \kern -0.17em A \ \phi$ of state formulas are obvious. The other cases are proven as
follows.
\begin{itemize}
\item If $\psi$ has the form $[\alpha]\phi$, assume that $\sigma \llbracket [\alpha]\phi \rrbracket$ = {\em false}. Then there is some
terminating trace $\nu \in \mu(\alpha)$ with first\kern 0.25em$\nu = \sigma$ such that last\kern 0.25em$\nu \llbracket \phi \rrbracket$ = {\em false}. By induction hypothesis, this implies that last\kern 0.25em$\nu_{\qdl} \llbracket \phi \rrbracket$ = {\em false}. According
to Lemma 1, $(\sigma$, last\kern 0.25em$\nu) \in \rho(\alpha)$ holds, which implies $\sigma_{\qdl} \llbracket [\alpha]\phi \rrbracket$ = {\em false}. For the converse direction, assume that $\sigma_{\qdl} \llbracket [\alpha]\phi \rrbracket$ = {\em false}. Then there is a $(\sigma, \tau) \in \rho(\alpha)$ with $\tau_{\qdl} \llbracket \phi \rrbracket$ = {\em false}. By Lemma 1, there is a terminating trace $\nu \in \mu(\alpha)$ with first\kern 0.25em$\nu = \sigma$ and last\kern 0.25em$\nu = \tau$. By induction hypothesis, last\kern 0.25em$\nu \llbracket \phi \rrbracket = $ {\em false}. Thus, we can conclude that both $\nu \llbracket \phi \rrbracket = $ {\em false} and $\nu \llbracket [\alpha]\phi \rrbracket = $ {\em false}.
\item The case $\psi = \langle \alpha \rangle \phi$ is proven accordingly.
\end{itemize}
\end{proof}

\newpage

\section{Proof of Soundness and Completeness}

\subsection{Proof of Soundness}

\begin{proof}[Proof of Theorem~\ref{theo:soundness}]
We show that all rules of the \qdtl calculus are {\em locally sound}, i.e., for all states $\sigma$, the conclusion of a rule is true in state $\sigma$ when all
premisses are true in $\sigma$. Let $\sigma$ be any state. For each rule we
have to show that the conclusion is true in $\sigma$ assuming the premisses are true
in $\sigma$. Inductively, the soundness of the non-temporal rules follows from Proposition~\ref{prop:cextension} and local soundness of the corresponding rules in \qdl~\cite{DBLP:conf/csl/Platzer10,DBLP:conf/csl/Platzer10:TR}. The proof for the generalization in $[\cup]$ and $\langle \cup \rangle$ to path formulas $\pi$ is a straightforward extension.
\begin{itemize}
\item $[;]${\tiny $\Box$} \  Assume $\sigma \models [\alpha]\Box\phi$ and $\sigma \models [\alpha][\beta]\Box\phi$. Let $\nu \in \mu(\alpha;\beta)$, i.e., $\nu = \varrho \circ \varsigma$ with first\kern 0.25em$\nu$ = $\sigma$, $\varrho \in \mu(\alpha)$, and $\varsigma \in \mu(\beta)$. If $\varrho$ does not terminate, then $\nu = \varrho \in \mu(\alpha)$ and $\nu \models \Box \phi$ by premise. If $\varrho$ terminates with last\kern 0.25em$\varrho$ = first\kern 0.25em$\varsigma$, then $\varrho \models \Box \phi$ by premise. Further, we know $\sigma \models [\alpha][\beta]\Box\phi$. In particular for trace $\varrho \in \mu(\alpha)$, we have last\kern 0.25em$\varrho$ $\models [\beta]\Box\phi$. Thus, $\varsigma \models \Box \phi$ because $\varsigma \in \mu(\beta)$ starts at first\kern 0.25em$\varsigma$ = last\kern 0.25em$\varrho$. By composition, $\varrho \circ \varsigma \models \Box \phi$ . As $\nu = \varrho \circ \varsigma$ was arbitrary, we can conclude $\sigma \models [\alpha; \beta]\Box \phi$. The converse direction holds, as all traces of $\alpha$ are prefixes of traces of $\alpha; \beta$. Hence, the assumption $\sigma \models [\alpha;\beta]\Box \phi$ directly implies $\sigma \models [\alpha]\Box\phi$. Further, all traces of $\beta$ that begin at a state reachable from $\sigma$  by $\alpha$ are suffixes of traces of $\alpha;\beta$ that start in $\sigma$. Hence, $\sigma \models [\alpha][\beta]\Box\phi$ is implied as well.
\item $[?]${\tiny $\Box$} \ Soundness of $[?]${\tiny $\Box$} is obvious, since, by premise, we can assume $\sigma \models \phi$, and there is nothing to show for $\Lambda$ states according to Definition 3. Conversely, $\hat{\sigma}$ is a prefix of all traces in $\mu(?\chi)$ that start in $\sigma$.
\item $[:=]${\tiny $\Box$} \ Assuming $\sigma \models \phi$ and $\sigma \models [\forall i : \kern -0.17em A \ f(\boldsymbol{s}) := \theta] \phi$, we have to show that $\sigma \models [\forall i : \kern -0.17em A \ f(\boldsymbol{s}) := \theta] \Box \phi$. Let $\nu \in \mu(\forall i : \kern -0.17em A \ f(\boldsymbol{s}) := \theta)$ be any trace with first\kern 0.25em$\nu$ = $\sigma$, i.e., $\nu = (\hat{\sigma}, \hat{\tau})$ by Definition 2. Hence, the only two states we need to consider are $\nu_{0}(0) = \sigma$ and $\nu_{1}(0) = \tau$. By premise, $\nu_{0}(0) = \sigma$ yields $\nu_{0}(0) \models \phi$. Similarly, for the state $\nu_{1}(0) =$ last\kern 0.25em$\nu = \tau$, the premise gives $\nu_{1}(0) \models \phi$. The converse
    direction is similar.
\item $[']${\tiny $\Box$} \ We prove that $[']${\tiny $\Box$} is locally sound by contraposition. For this, assume that $\sigma \not \models [\forall i : \kern -0.17em A \ f(\boldsymbol{s})' = \theta \with \chi] \Box \phi$; then there is a trace $\nu = (\varphi) \in \mu(\forall i : \kern -0.17em A \ f(\boldsymbol{s})' = \theta \with \chi)$ starting in first\kern 0.25em$\nu = \sigma$ and $\sigma \not \models \Box \phi$. Hence, there is a position $(0, \zeta)$ of $\nu$ with $\nu_{0}(\zeta) \not \models \phi$. Now $\varphi$ restricted to $[0, \zeta]$ also solves the quantified differential equation $\forall i : \kern -0.17em A \ f(\boldsymbol{s})' = \theta \with \chi$. Thus, $(\varphi |_{[0, \zeta]}) \not \models \phi$ as $\varphi(\zeta) \not \models \phi$, since the last state is $\varphi(\zeta)$. By consequence, this gives $\sigma \not \models [\forall i : \kern -0.17em A \ f(\boldsymbol{s})' = \theta \with \chi] \phi$. The converse direction is obvious as last\kern 0.25em$\nu$ always is a state occurring during $\nu$. Hence,  $\sigma \not \models [\forall i : \kern -0.17em A \ f(\boldsymbol{s})' = \theta \with \chi] \phi$ immediately implies $\sigma \not \models [\forall i : \kern -0.17em A \ f(\boldsymbol{s})' = \theta \with \chi] \Box \phi$.
\item $[{}^{\ast n}]${\tiny $\Box$} \ By contraposition, assume that $\sigma \not \models [\alpha^{\ast}]\Box\phi$. Then there is an $n \in \mathbb{N}$ and a trace $\nu \in \mu(\alpha^{n})$ with first\kern 0.25em$\nu = \sigma$ such that $\nu \not \models \Box \phi$. There are two cases. If $n > 0$ then $\nu \in \mu(\alpha;\alpha^{\ast})$, and thus $\sigma \not \models [\alpha;\alpha^{\ast}]\Box \phi$. If, however, $n = 0$, then $\nu = (\hat{\sigma})$ and $\sigma \not \models \phi$. Hence, all traces $\varsigma \in \mu(\alpha; \alpha^{\ast})$ with first\kern 0,25em$\varsigma = \sigma$ satisfy $\varsigma \not \models \Box \phi$. Finally, it is easy to see that all programs have at least one such trace $\varsigma$ that witnesses $\sigma \not \models [\alpha; \alpha^{\ast}] \Box \phi$. The converse direction is easy as all behaviour of $\alpha; \alpha^{\ast}$ is subsumed by $\alpha^{\ast}$, i.e., $\mu(\alpha; \alpha^{\ast}) \subseteq \mu(\alpha^{\ast})$.
\item $[{}^{\ast}]${\tiny $\Box$} \ Clearly, using the fact that $\mu(\alpha^{\ast}) \supseteq \mu(\alpha^{\ast};\alpha)$, the set of states along the traces of $\alpha^{\ast}$ at which $\phi$ needs to be true for the premise is a subset of the corresponding set for the conclusion. Hence, the conclusion entails the premise. Conversely, all states during traces of $\alpha^{\ast}$ are also reachable by iterating $\alpha$ sufficiently often to completion and then following a single trace of $\alpha$. In detail: If $\sigma \not \models [\alpha^{\ast}] \Box \phi$. then there is a trace $\nu \in \mu(\alpha^{\ast})$ on which $\neg \phi$ holds true at some state, say, at $\nu_{k}(\zeta) \not = \Lambda$. Let $n \geq 0$ be the (maximum) number of complete repetitions of $\alpha$ along $\nu$ before discrete step index $k$. That is, there is some discrete step index $k_n < k$ such that the prefix $\varrho = (\nu_0, \ldots, \nu_{k_n}) \in \mu(\alpha^{n})$ of $\nu$ consists of $n$ complete repetitions of $\alpha$ and the suffix $\varsigma = (\nu_{k_{n+1}}, \nu_{k_{n+2}}, \ldots) \in \mu(\alpha^{\ast})$ starts with a trace of $\alpha$ during which $\neg \phi$ occurs at point $\nu_{k}(\zeta)$, namely at relative position $(k - (k_n + 1), \zeta)$. Let $\acute{\varsigma} \in \nu(\alpha)$ be this prefix of $\varsigma$. Consequently, $\acute{\varsigma} \models \langle \alpha \rangle \Diamond \neg \phi$ and the trace $\varrho \circ \acute{\varsigma}$ is a witness for $\sigma \models \langle \alpha^{\ast} \rangle \langle \alpha \rangle \Diamond \neg \phi$.
\end{itemize}
The proofs for $\langle ; \rangle \diamond$--$\langle {}^{\ast} \rangle \diamond$ are dual, since $\langle \alpha \rangle \Diamond \phi$ is equivalent to $\neg [\alpha]\Box \neg \phi$ by duality.
\end{proof}

\subsection{Proof of Relative Completeness}
\begin{proof}[Outline of Proof of Theorem~\ref{theo:completeness}]
The proof is a simple extension of the proof of Theorem 1 in~\cite{DBLP:conf/csl/Platzer10:TR}, which is the relative completeness theorem for \qdl, because the \qdtl calculus successively reduces temporal properties to non-temporal properties and, in particular, handles loops by inductive definition rules in terms of \qdl modalities. The temporal rules in Fig.~\ref{fig:rules} transform temporal formulas to simpler formulas, i.e., to where the temporal modalities occur after simpler programs ($[\cup]${\tiny $\Box$}, $[{}^{\ast}]${\tiny $\Box$}, $\langle \cup \rangle \diamond$, $\langle {}^{\ast} \rangle \diamond$) or disappear completely ($[?]${\tiny $\Box$}, $[:=]${\tiny $\Box$}, $[']${\tiny $\Box$} and $\langle ? \rangle \diamond$, $\langle := \rangle \diamond$, $\langle ' \rangle \diamond$). Hence, the inductive relative completeness proof for \qdl in~\cite{DBLP:conf/csl/Platzer10:TR} directly generalizes to \qdtl with the following addition: After applying $[{}^{\ast}]${\tiny $\Box$} or $\langle {}^{\ast} \rangle \diamond$, loops are ultimately handled by the standard \qdl rules {\em ind} and {\em con}. To show that sufficiently strong invariants and variants exist for the temporal postconditions $[\alpha]\Box\phi$ and $\langle \alpha \rangle \Diamond \phi$, we only have to show that such temporal formulas are expressible in the {\em first-order logic of quantified
differential equations}, FOQD~\cite{DBLP:conf/csl/Platzer10:TR}.
\end{proof}

\newpage
\section{Proofs for Liveness Verification by Quantifier Alternation}

In this section, we prove that the rules in Section 8 are sound.
\begin{proposition}[Local soundness]
The rules in Section 8 are locally sound for finitary liveness semantics.
\end{proposition}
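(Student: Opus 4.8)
The plan is to prove local soundness separately for the two rules, reducing the second to a trace-lifting lemma about the instrumented program $\check\alpha$. Throughout I use the finitary reading of $[\alpha]\Diamond\phi$, under which $\sigma\models[\alpha]\Diamond\phi$ holds iff every \emph{terminating} trace $\nu\in\mu(\alpha)$ with $\text{first}\,\nu=\sigma$ has a position $(k,\zeta)$ with $\nu_k(\zeta)\neq\Lambda$ and $\nu_k(\zeta)\models\phi$; equivalently, $\nu\not\models\Diamond\phi$ means that $\phi$ fails at every non-$\Lambda$ position of $\nu$.

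For $[;]\diamond$ I would assume the succedent disjunction $[\alpha]\Diamond\phi\vee[\alpha][\beta]\Diamond\phi$ holds in $\sigma$ and fix an arbitrary terminating $\nu\in\mu(\alpha;\beta)$ with $\text{first}\,\nu=\sigma$. By clause (5) of Definition~\ref{def:tracesemanticsofQHP}, $\nu=\varrho\circ\varsigma$ with $\varrho\in\mu(\alpha)$, $\varsigma\in\mu(\beta)$, and—since $\nu$ terminates—$\varrho$ terminates with $\text{last}\,\varrho=\text{first}\,\varsigma$ and $\varsigma$ terminates. If the first disjunct holds, then $\varrho$, a terminating $\alpha$-trace from $\sigma$, already satisfies $\Diamond\phi$, and since $\varrho$ is a prefix of $\nu$ the witnessing position survives in $\nu$. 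If instead the second disjunct holds, then $\text{last}\,\varrho\models[\beta]\Diamond\phi$, so the terminating $\beta$-trace $\varsigma$ starting in $\text{last}\,\varrho$ satisfies $\Diamond\phi$, and as $\varsigma$ is a suffix of $\nu$ we again obtain a witness in $\nu$. Either way $\nu\models\Diamond\phi$, and since $\nu$ was arbitrary, $\sigma\models[\alpha;\beta]\Diamond\phi$.

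For $[\alpha]\diamond$ I would first isolate the combinatorial core as a lemma: if $\nu\in\mu(\alpha)$ satisfies $\phi$ at no non-$\Lambda$ position and $\sigma=\text{first}\,\nu$, then for every $c\in\mathbb{R}$ there is a trace $\check\nu\in\mu(\check\alpha)$ starting in $\sigma_t^{c}$ that agrees with $\nu$ on all symbols other than the fresh variable $t$, keeps $t\equiv c$, and terminates exactly when $\nu$ does. The proof is a structural induction on $\alpha$ using the compositional Definition~\ref{def:tracesemanticsofQHP}: because $\phi$ does not mention $t$ and is false at the relevant states, each inserted test $?(\phi\to t=1)$ passes (contributing only a point flow) and each replaced evolution domain $\chi\wedge(\phi\to t=1)$ coincides with $\chi$, so the continuous pieces and the final state on the original variables are unchanged. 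Granting this lemma, I assume $\sigma\models\phi\vee\forall t\,[\check\alpha]t=1$. If $\sigma\models\phi$, then $\nu_0(0)=\sigma\neq\Lambda$ witnesses $\Diamond\phi$ on every trace, so $\sigma\models[\alpha]\Diamond\phi$. Otherwise $\sigma\models\forall t\,[\check\alpha]t=1$; suppose toward a contradiction that some terminating $\nu\in\mu(\alpha)$ from $\sigma$ has $\nu\not\models\Diamond\phi$. Applying the lemma with $c=0$ yields a terminating $\check\nu\in\mu(\check\alpha)$ from $\sigma_t^{0}$ whose last state has $t=0\neq1$, contradicting the instance $\sigma_t^{0}\models[\check\alpha]t=1$ of the premise. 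Hence every terminating $\nu$ satisfies $\Diamond\phi$, and $\sigma\models[\alpha]\Diamond\phi$.

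I expect the lifting lemma to be the main obstacle, and specifically its sequential-composition and repetition cases: I must track that the point flows introduced by passing tests neither create nor destroy termination and leave $\text{last}$ unchanged on the original signature, and that instantiating the universally quantified $t$ at $0$ turns each guard $\phi\to t=1$ into $\neg\phi$, which is exactly what makes the $\phi$-failing trace feasible for $\check\alpha$ while forcing $t$ to remain $0$. The rule $[;]\diamond$, by contrast, is purely a prefix/suffix bookkeeping argument and needs no new machinery; note that its soundness is not in tension with its known incompleteness, since the latter concerns only the mixed situation where $\phi$ first becomes true during $\beta$ on some traces while other traces already satisfy $\phi$ during $\alpha$.
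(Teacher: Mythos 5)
Your proof is correct and follows essentially the same route as the paper's: the same two-disjunct case analysis for $[;]\diamond$, and for $[\alpha]\diamond$ the same contradiction argument that a $\phi$-avoiding terminating trace of $\alpha$ induces a terminating trace of $\check{\alpha}$ on which the inserted guards $\phi \rightarrow t = 1$ are vacuous and the fresh variable $t$ stays constant at $0$, refuting $\forall t : \kern -0.17em \mathbb{R}\ [\check{\alpha}]t = 1$. Your only departure is packaging the trace correspondence as an explicit lifting lemma proved by structural induction (the paper asserts this correspondence informally), and omitting the paper's additional converse direction, which is not needed for local soundness as stated.
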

\begin{proof}
Let $\sigma$ be any state.
\begin{itemize}
\item $[;] \diamond$ \ Assuming that the premiss is true, we need to consider two cases corresponding
to the two formulas of its succedent. If $\sigma \models [\alpha] \Diamond \phi$, then obviously $\sigma \models [\alpha;\beta] \Diamond \phi$, as every trace of $\alpha;\beta$ has a trace of $\alpha$ as prefix, during which $\phi$ holds at least once. If, however, $\sigma \models [\alpha][\beta]\Diamond \phi$,
then $\phi$ occurs at least once during all traces that start in a state reachable
from $\sigma$ by $\alpha$. Let $\varrho \circ \varsigma \in \mu(\alpha;\beta)$ with first\kern 0.25em$\varrho = \sigma$, $\varrho \in \mu(\alpha)$ and $\varsigma \in \mu(\beta)$. In finitary liveness semantics, $\varrho \circ \varsigma$ can be assumed to terminate (otherwise there is nothing to show). Then, last\kern 0.25em$\varrho$ is a state reachable from $\sigma$ by $\alpha$, hence $\varsigma \models \Diamond \phi$. Thus, $\varrho \circ \varsigma \models \Diamond \phi$.
\item $[\alpha]\diamond$ \ For the soundness of $[\alpha]\diamond$, first observe that the truth of $\sigma \llbracket \phi \rrbracket$ of $\phi$ depends
on the state $\sigma$, hence it can only be affected during state changes. Further,
the only actual changes of valuations happen during discrete jumps $\forall i : \kern -0.17em A \ f(\boldsymbol{s}) := \theta$ or continuous evolutions $\forall i : \kern -0.17em A \ f(\boldsymbol{s})' = \theta \with \chi$. All other system actions only cause control
flow effects but no elementary state changes. Assume the premise is
true in a state $\sigma$. If $\sigma \models \phi$, the conjecture is obvious. Hence, assume $\sigma \models \forall t : \kern -0.17em \mathbb{R}\ [\check{\alpha}]t = 1$. Suppose $\sigma \not \models [\alpha]\phi$; then there is a trace $\nu \in \mu(\alpha)$ with $\nu \not \models \Diamond \phi$. Then, this trace directly corresponds to a trace $\check{\nu}$ of $\check{\alpha}$ in which all $\phi \rightarrow t = 1$ conditions are trivially satisfied as $\phi$
never holds. As there are no changes of the fresh variable $t$ during $\check{\alpha}$, the
value of $t$ remains constant during $\check{\nu}$. But then we can conclude that there
is a trace, which is essentially the same as $\check{\nu}$ except for the constant valuation
of the fresh variable $t$ on which no conditions are imposed, hence $t = 0$ is possible. As these traces terminate in finitary liveness semantics, we can
conclude $\sigma \not \models \forall t : \kern -0.17em \mathbb{R} \ [\check{\alpha}] t = 1$, which is a contradiction. Conversely for
equivalence of premiss and conclusion, assume $\sigma \not \models \phi \vee \forall t : \kern -0.17em \mathbb{R} \ [\check{\alpha}] t = 1$. Then, the initial state $\sigma$ does not satisfy $\phi$ and it is possible for $\check{\alpha}$ to execute along a terminating trace $\nu$ that permits $t$ to be $\not = 1$. Suppose there was a position $(k, \zeta)$ of $\nu$ at which $\nu_{k}(\zeta) \models \phi$. Without loss of generality, we
can assume $(k, \zeta)$ to be the first such position. Then, the hybrid action which
regulates $\nu_k$ is accompanied by an immediate condition that $\varphi \rightarrow t = 1$,
hence $t = 1$ holds if $\nu$ terminates. Since the fresh variable $t$ is rigid (is
never changed during $\check{\alpha}$) and $\nu$ terminates in finitary liveness semantics, we conclude last\kern 0.25em$\nu \llbracket t \rrbracket = 1$, which is a contradiction.
\end{itemize}

\end{proof}

\end{document}